\newcounter{TempEqCnt}
\begin{document}

\title{Fast Ray-Tracing-Based Precise Underwater Acoustic Localization without Prior Acknowledgment of Target Depth}

\author{Wei Huang~\IEEEmembership{Member,~IEEE,} Hao Zhang~\IEEEmembership{Senior Member,~IEEE,} Kaitao Meng,~\IEEEmembership{Member,~IEEE,}, Fan Gao, Wenzhou Sun, Jianxu Shu, Tianhe Xu*, and Deshi Li*
\thanks{Manuscript received September 19, 2023; revised XX XX, 2023.}
\thanks{Wei Huang and Hao Zhang are with the Faculty of Information Science and Engineering, Ocean University of China, Qingdao, China (e--mail: hw@ouc.edu.cn, zhanghao@ouc.edu.cn); Kaitao Meng is with the Department of Electronic and Electrical Engineering, University College London, U.K.; Tianhe Xu, Fan Gao, and Jianxu Shu are with the Institute of Space Sciences, Shandong University, Weihai, China (e--mail:thxu@sdu.edu.cn, gaofan@sdu.edu.cn); Deshi Li is with the Electronic Information School, Wuhan University, Wuhan, China (e--mail: dsli@whu.edu.cn); Wenzhou Sun is with the State Key Laboratory of Geo-information Engineering, Xi'an, China. (Corresponding author: Tianhe Xu and Deshi Li.)}}

\markboth{IEEE Transactions on Geoscience and Remote Sensing, VOL. X, NO. X, XX, 2023}%
{Huang \MakeLowercase{\textit{et al.}}: Fast Ray-Tracing-Based Precise Underwater Acoustic Localization without Prior Acknowledgment of Target Depth}


\maketitle

\begin{abstract}
Underwater localization is of great importance for marine observation and building positioning, navigation, timing (PNT) systems that could be widely applied in disaster warning, underwater rescues and resources exploration. The uneven distribution of underwater sound velocity poses great challenge for precise underwater positioning. The current soundline correction positioning method mainly aims at scenarios with known target depth. However, for nodes that are non-cooperative nodes or lack of depth information, soundline tracking strategies cannot work well due to nonunique positional solutions. To tackle this issue, we propose an iterative ray tracing 3D underwater localization (IRTUL) method for stratification compensation. To demonstrate the feasibility of fast stratification compensation, we first derive the signal path as a function of glancing angle, and then prove that the signal propagation time and horizontal propagation distance are monotonic functions of the initial grazing angle, so that fast ray tracing can be achieved. Then, we propose an sound velocity profile (SVP) simplification method, which reduces the computational cost of ray tracing. Experimental results show that the IRTUL has the most significant distance correction in the depth direction, and the average accuracy of IRTUL has been improved by about 3 meters compared to localization model with constant sound velocity. Also, the simplified SVP can significantly improve real-time performance with average accuracy loss less than 0.2 m when used for positioning.
\end{abstract}

\begin{IEEEkeywords}
Underwater localization, ray tracing, sound velocity profile (SVP), depth tuning.
\end{IEEEkeywords}

\section{Introduction}
\IEEEPARstart{U}{nderwater} localization is one of the most essential techniques for building marine observation systems and positioning, navigation, timing (PNT) systems, which could be widely applied in disaster warning, underwater rescue and resources exploring. Compared with radio or optical signal, acoustic wave has become the most popular signal carrier in underwater localization \cite{Tan2011LocationSurvey}. However, unlike terrestrial radio, underwater acoustic localization faces with many challenges due to the special underwater environment, such as the difficulty in clock synchronization caused by long signal propagation delay \cite{Tan2011LocationSurvey,Qu2016LocattionSurvey}, the multipath effect caused by signal reflection at the ocean surface or bottom \cite{Luo2018Research}, insufficient reference nodes due to limited communication coverage of nodes \cite{Luo2018Research}, and the signal propagation path bending, which is called stratification effect, caused by the spatio-temporal variety of sound velocity \cite{Luo2021Review,jensen2011computational}.

\indent Comprehensive surveys of challenges and techniques in underwater localization have been done in \cite{Tan2011LocationSurvey,Qu2016LocattionSurvey,Luo2018Research,Luo2021Review}.  In 1991, James proposed a buoy-based long baseline positioning system in \cite{Youngberg1991GPStoUnderwater}, which extended the concept of global positioning system (GPS) to underwater systems for the first time. In 1994, Thomas developed an intelligent buoy underwater positioning system based on GPS in \cite{Thomas1994WaterGPS}. However, due to the special nature of underwater acoustic channels, high--precision positioning faces challenges mentioned above.

\indent To improve the reliability and accuracy of the positioning system, early positioning work mainly focused on the clock asynchronization problem for accurate underwater positioning. Cheng et~al. proposed a time of arrival (TOA) based silent underwater positioning scheme (UPS) in \cite{Cheng2008UPS}, which transforms the clock asynchronization problem between target node and reference nodes into the clock asynchronization problem among reference nodes. Since the clock difference between the reference node is known, the clock asynchronization problem is solved. However, due to the existence of multiple solutions in some areas, there are high requirements for the deployment position of anchor nodes. To make the positioning algorithm more universal, Cheng et~al. proposed a large--scale static node localization method named LSLS in \cite{Cheng2009LSLS}, in which the located node will serve as a new reference node, and gradually complete the positioning process of the whole ocean area. Liu et~al., Carroll et~al., and Yan et~al. respectively proposed localization algorithms in \cite{Liu2010ARTL,Carroll2014Ondemand,Yan2018Asynchronous} that utilize asymmetric signal round--trip processes to reduce clock synchronization requirements.

\indent As the development of urban positioning and indoor wireless positioning technology, many advanced signal processing technologies have emerged to weaken multipath signals or enhance the main path signals, such as sparse channel estimation in \cite{Berger2010SparseChannel,Gwun2013SparseChannel}, turbo equalization in \cite{Michael2011Turbo,Jung2013Turbo}, decision feedback equalization in \cite{Mahmutoglu2016PSO}, and time reversal mirror in \cite{Liu2019TimeReversal}, etc. Combining these signal processing methods, the impact of underwater acoustic multipath effect on underwater ranging can be effectively reduced. For example, Tan and Li proposed a centralized algorithm to overcome the severe multipath property of the underwater environment due to scattering from the seabed and ocean surface in \cite{Tan2010Cooperative}.

\indent With the scale expansion of underwater sensor networks, large-scale positioning has received widespread attention. However, due to the expensive underwater equipment and sparse node deployment, underwater positioning faces the problem of insufficient reference nodes. To tackle this problem, Zhou et~al. proposed an efficient localization algorithm for large--scale underwater sensor network in \cite{ZHOU2010EffLoc}, in which a multi--hop 3D Euclidean distance estimation method was proposed, and new reference nodes were searched among two hop neighbors. Based on \cite{ZHOU2010EffLoc}, Zhang et~al. proposed a top--down positioning scheme in \cite{Zhang2014ATP} that optimized the selection of new reference nodes. \cite{ZHOU2010EffLoc,Zhang2014ATP} provides positioning solution methods when the number of reference nodes is insufficient, but there is still a problem of positioning ambiguity (multi solution). Huang et~al. proposed an angle--of--arrival (AOA) assisted positioning method in \cite{Huang2016AOA}, which has no positioning ambiguity problem, but increases functional requirements for node equipment and raises production costs. Teymorian et~al.proposed a 3D localization algorithm in \cite{Teymorian2009USP}. When the depth information of the target is known, a projection method is proposed to project the reference node vertically onto the depth plane where the target node is located, thereby transforming the three-dimensional positioning problem into a two-dimensional positioning problem. Thus, there will be at least three reference nodes to complete underwater target positioning, which reduces requirement for the number of reference nodes. 

\indent Aforementioned algorithms have proposed good solutions for the problem of high latency, strong multipath, and limited node coverage for underwater positioning. However, they overlooked an important issue that the underwater sound velocity distribution exhibits spatio-temporal variability, resulting in a significant Snell effect in the signal propagation mode \cite{Tan2011LocationSurvey,Qu2016LocattionSurvey,Luo2021Review}. The strong Snell effect leads to the signal propagation path bending, which makes it difficult to accurately measure the signal propagation range. To reduce the accuracy loss of positioning caused by sound velocity estimation error, some works have been done to study how to correct the positioning error caused by sound velocity, which is called stratification compensation in this paper.

\indent Diamant and Lampe proposed a localization algorithm in \cite{Diamant2013Loc} that regards the sound velocity as an invariant parameter to be solved. Liu et~al. proposed a joint synchronization and localization method in \cite{Liu2016JSL}, Zhang et~al. proposed a signal trajectory correction localization method based on Gaussian Newton solution in \cite{Zhang2017Stratification}, Zhang et~al. proposed a Cramer-Rao lower bound based optimality localization (CRLB--OL) method in \cite{Zhang2018USC}, and Zhang et~al. combined TDOA and AOA measurements for localization with a constrained weighted least-squares estimator in \cite{Zhang2021EfficientUA}. \cite{Liu2016JSL,Zhang2017Stratification,Zhang2018USC,Zhang2021EfficientUA} have addressed the problem of stratification compensation, but mainly focused on system nodes with known target depth information. For non cooperative target node localization or nodes with damaged depth sensor, existing stratification compensation methods are difficult to apply.

\indent In order to accurately locate non cooperative targets or cooperative targets with damaged depth sensor, we propose a stratification compensation 3D underwater localization method based on iterative ray tracing (IRTUL). The core idea is to optimize the matching degree between simulated sound field and measured sound field based on ray--tracing theory, and iteratively correct the target location. To accelerate the positioning algorithm, we derive the signal propagation time and horizontal propagation distance in ray theory as a function of the initial grazing angle, and prove that they are both monotonically related to the initial grazing angle, so that the dichotomy method can be used for rapid searching of sound lines. Meanwhile, we propose a simplified expression method for the SVP, which reduces the sampling points and reduces the computational burden of ray tracing while ensuring the accuracy of signal ray tracing. The contribution of this paper is summarized as follows:
\begin{itemize}
	\item To accurately localize underwater target without known of depth information, we propose an IRTUL method, which can accurate locate targets without prior depth information.
	\item To demonstrate the feasibility of fast stratification compensation, we first derive the signal path as a function of glancing angle, then prove the executable of fast ray tracing.
	\item To accelerate the ray tracing process, we propose an SVP simplification method, which reduces the computational cost of ray tracing.
\end{itemize}

\indent The rest of this paper is organized as follows. In Sec. 2, related works about underwater acoustic localization are reviewed. In Sec. 3, we propose the IRTUL method for underwater target positioning, derive the signal path as a function of glancing angle, prove the executable of fast ray tracing, and give the fast achievement flowchart for IDAUL. Experimental results are discussed in Sec. 4, and conclusions are given in Sec. 5.
\section{Related Works}
Youngberg proposed a buoy-based long baseline positioning system in \cite{Youngberg1991GPStoUnderwater}, which first extended the concept of global positioning system (GPS) to underwater systems, in which several buoys are used to form a long-baseline underwater acoustic localization system and provide navigation service for underwater users. In 1995, Thomas introduced a GPS assisted intelligent buoy underwater positioning system. Afterwards, commercial underwater acoustic positioning systems have been produced by various companies, such as Kongsberg Simrad in Norway, Link Quest in the United States, Nautronix in Australia, and Sonardyne in the United Kingdom.

\indent In order to improve the reliability and real-time performance of underwater acoustic positioning systems, many terrestrial positioning algorithms have been migrated to underwater positioning systems, however, the natural characteristics of underwater environments poses challenges for improving the accuracy and real-time performance of positioning algorithms. Erol et~al. conducted research on existing underwater acoustic sensor network architectures and positioning algorithms in \cite{erol2011survey}, analyzing the differences between underwater acoustic wireless sensor networks and terrestrial wireless sensor networks, as well as the impact of high attenuation, high latency, and other special characteristics of the marine environment on underwater positioning. Tan et~al. and Luo et~al. respectively analyzed the difficulties and challenges in underwater target positioning in \cite{Tan2011LocationSurvey} and \cite{Luo2021Review}, which pointed out the main factors affecting positioning accuracy, such as the change of sound velocity, clock asynchrony, and movement of sensor nodes.

\indent According to the positioning mode, localization algorithms can be divided into non--ranging and ranging--based positioning algorithms. The non-ranging localization algorithm mainly estimates the approximate position of targets based on the coverage area of reference nodes. Typical works include the positioning algorithms based on multi--hop distance such as DV--hop proposed by Niculescu and Nath in \cite{Niculescu2001ADHOC,Niculescu2003DV} and the density--aware hop--count localization (DHL) proposed by Wong et~al. in \cite{Wong2005density}, positioning algorithms dividing target areas based on the coverage range of signal arrival intensity such as the range-free localization (RLS) proposed by He et~al. in \cite{He2003RLS} and area localization scheme (ALS) proposed by Chandrasekhar and Seah in \cite{Chandrasekhar2006ALS}; positioning algorithms based on region segmentation with the assistance of underwater autonomous vehicles (AUVs) such as localization with mobile Beacons (LoMob) proposed by Lee and Kim in \cite{Lee2012LoMob} and localization with randomly moving AUV proposed by Zandi et~al. in \cite{Zandi2015AUV}; the dual-hydrophone localization algorithm proposed by Zhu et~al. in \cite{Zhu2016dualhydrophone},etc. The non-ranging positioning algorithms have the advantage of low communication energy consumption and long life periods, but the accuracy performance is difficult to meet the growing demand of underwater PNT systems.

\indent For accurate positioning, ranging-based positioning algorithms leveraging sound field observation information such as time of arrival (TOA), time difference of arrival (TDOA), angle of arrival (AOA), and received signal strength indicator (RSSI) have been widely studied. Early positioning algorithms mainly addressed the clock asynchronous problem, Cheng et~al. proposed an underwater positioning scheme (UPS) for static nodes in \cite{Cheng2008UPS}, which transforms the problem of clock asynchronization between the target node and reference nodes into the problem of clock asynchronization among reference nodes (the clock difference of reference nodes is known). The clock synchronization is not required, but there is a problem of multiple solutions in some areas, thus requiring high deployment positions for anchor nodes. Cheng et~al. expanded the application range of UPS and proposed a localization scheme for large scale underwater network (LSLS) in \cite{Cheng2009LSLS}, in which located nodes serve as new reference nodes for positioning other nodes. Liu et~al. and Carroll et~al. respectively proposed the asymmetrical round-trip localization (ARTL) and the on-demand asynchronous localization (ODAL) to reduce clock synchronization requirements in \cite{Liu2010ARTL} and \cite{Carroll2014Ondemand}. These positioning algorithms provide excellent solutions for dealing with asynchronous clock of underwater nodes, but require a certain number of static anchor nodes, which is difficult to be satisfied in some underwater systems.

\indent The localization of underwater nodes typically requires at least 4 reference nodes, however, due to the limited communication coverage of reference nodes, there may not be enough reference nodes for some nodes to be located. With the development of underwater unmanned platforms, Erol proposed a localization method based on dive and rise reference nodes (DNR) in \cite{Erol2007DNR} that introduces vertical mobile anchors to assist network node localization. Isik proposed an AUV-assisted 3D underwater localization (3DUL) algorithm in \cite{Isik2009DUL} that introduces AUV as moving anchors for network node localization. However, due to the accumulation of inertial navigation errors, the positioning accuracy of mobile nodes is insufficient. Teymorian proposed an equivalent 3D underwater sensor positioning (USP) algorithm in \cite{Teymorian2009USP}. In USP, a projection method is used to vertically project the reference node into the known depth plane (through depth sensors) where the target node is located. USP transforms the 3D positioning problem into a 2D positioning problem, so that the demand for the number of reference nodes is reduced. Inspired by USP, Kurniawan proposed a projection based underwater localization (PUL) algorithm in \cite{Kurniawan2013PUL} that sets up a series of virtual depth layer to optimize anchor node selection for large scale network localization. \cite{Teymorian2009USP} and \cite{Kurniawan2013PUL} improve the flexibility of positioning algorithm, but have a high dependence on depth sensors. 

\indent To solve the problem of large--scale network localization, Zhou et~al. proposed a multi--hop 3D Euclidean distance estimation method for large--scale localization (LSL) in \cite{ZHOU2010EffLoc} that searches for reference nodes among two--hop neighbors and iteratively completes the positioning process. Uddin proposed a localization technique for underwater sensor networks (LOTUS) in \cite{Uddin2016LOTUS} that provides the potential location area of the target node when the number of reference nodes is insufficient. \cite{ZHOU2010EffLoc} and \cite{Uddin2016LOTUS} provide positioning solutions under situations of insufficient reference nodes, but there is a problem of positioning ambiguity. Alexandri et~al. proposed a time difference of arrival target motion analysis (TD--TMA) method for positioning underwater vehicles in \cite{Alexandri2022MotionAnalysis}, which is able to effectively address the positioning ambiguity caused by the lack of reference anchors. Weiss et~al. proposed a semi-blind method for underwater localization in \cite{Weiss2022SemiBlind}, in which the refraction signals of ocean surface and bottom are adopted to achieve target localization without line of sight signals. Sun et~al. defined a second--order time difference of arrival to improve the precision of TDOA algorithm in \cite{Sun2020BBB}, and proposed a localization algorithm based on decision tree in \cite{Sun2022TreeLoc}, which both gain good accuracy of underwater positioning. Li et~al. proposed a localization approach based on the track-before-detect (TBD) to directly determine the target's position in \cite{Li2023TrackDetect}, which solves the difficulty of direct sound selection.

\indent The research works aforementioned provide various solutions for underwater positioning in different scenarios, and combined with advanced signal processing algorithms, the impact of clock asynchronization and limited reference node's coverage on positioning accuracy can be effectively reduced. However, due to the spatio--temporal variability of underwater sound velocity, there will be obvious stratification effect. Therefore, there may be significant positioning errors when assuming that the signal propagates in a straight line. To tackle this issue, Diamant and Lampe proposed a localization algorithm in \cite{Diamant2013Loc} that regards the sound velocity as a parameter to be solved, but the unknown sound velocity is assumed to be invariant, limiting its application scenarios of deep ocean coverage. Ramezani et~al. adopted the Time-of-Flight (ToF) measurements in \cite{Ramezani2013SSP} to compensate the stratification. Based on \cite{Ramezani2013SSP}, Liu et~al. proposed a joint synchronization and localization method in \cite{Liu2016JSL}, Zhang et~al. proposed a signal trajectory correction localization method based on Gaussian Newton solution in \cite{Zhang2017Stratification}, Zhang et~al. proposed a Cramer-Rao lower bound based optimality localization (CRLB--OL) method in \cite{Zhang2018USC}, and Zhang et~al. combined TDOA and AOA measurements for localization with a constrained weighted least-squares estimator in \cite{Zhang2021EfficientUA}. \cite{Liu2016JSL,Zhang2017Stratification,Zhang2018USC,Zhang2021EfficientUA} have addressed the problem of stratification compensation, but mainly aimed at system nodes with known target depth information, and their applications are limited for situations where the target depth information is unknown or inaccurate. Gong et~al. adopted the deep neural networks to detect and locate a mobile underwater target in \cite{Gong2020MLLoc}, however, only average sound speed could be estimated. Inspired by \cite{Gong2020MLLoc}, Yan et~al. proposed a broad-learning-based localization algorithm with isogradient SVPs in \cite{Yan2023BroadLearning}, but faces the same problem with \cite{Liu2016JSL,Zhang2017Stratification,Zhang2018USC,Zhang2021EfficientUA} when the depth is unknown. The deep--learning--based algorithms in \cite{Gong2020MLLoc} and \cite{Yan2023BroadLearning} have historical memory to update position output, but suffer from a time-consuming training process due to a large number of parameters in filters and layers.

\indent Stratification effect seriously affects the accuracy of underwater localization and has received widespread attention. However, existing works with stratification compensation rely on the depth information measured by depth sensors. When the depth information is unknown or inaccurate, it is a challenge to accurately locate the target.

\section{Iterative Ray-tracing-based Underwater Acoustic Localization}
\subsection{System Model}
To solve the multi-solution issue during ray-tracing, we propose an IRTUL method, in which the 3D underwater localization problem is separated into multiple 2D localization problems. The localization system construction is shown in Fig.\ref{fig01}

\begin{figure}[!htbp]
	\centering
	\includegraphics[width=0.8\linewidth]{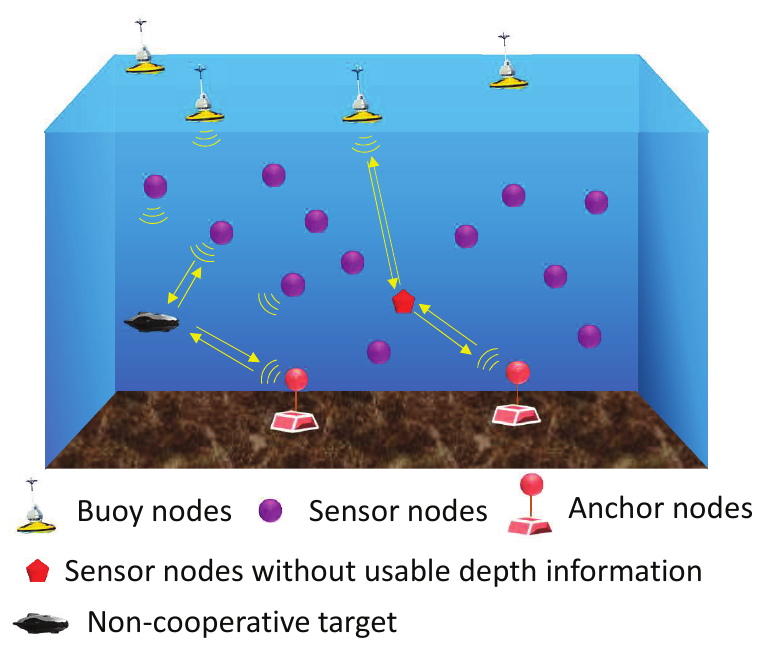}
	\caption{Localization system.}
	\label{fig01}
\end{figure}

\indent In this paper, we introduce our positioning method based on underwater observation networks. The localization system consists of 3 types of reference nodes: buoy nodes, sensor nodes, and anchor nodes. The position of buoy nodes is obtained by GPS and updated in real--time. Anchor nodes are fixed by carrying heavy objects and sinking to the seabed, which only need to be located when they are deployed. Sensor nodes are periodically located with the help of buoy and anchor nodes. There are 2 kinds of nodes to be located in our system, one is the sensor node whose depth information could not be directly measured through depth sensors, the other is the non--cooperative target whose depth information is an unknown parameter. The sensor node can communicate and interact with the reference node to achieve distance measurement, while non cooperative targets need to use detection signals for round-trip distance measurement.

\begin{figure}[!htbp]
	\centering
	\includegraphics[width=\linewidth]{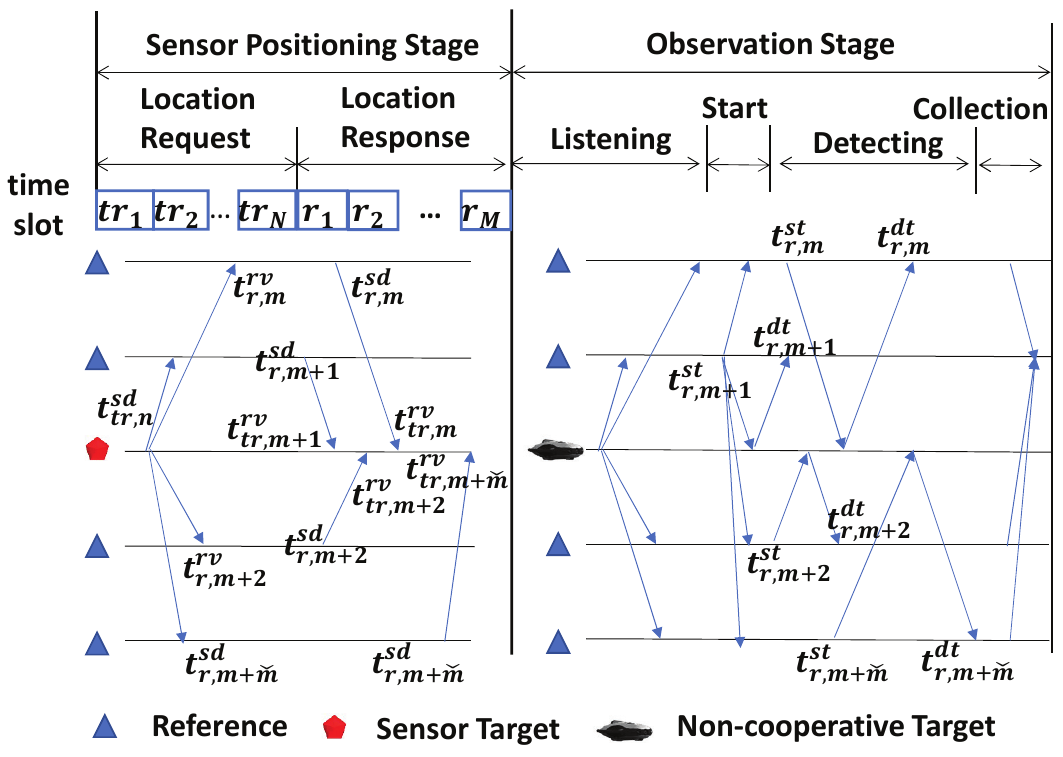}
	\caption{Positioning stage division.}
	\label{fig02}
\end{figure}

\indent There are two stage in the underwater observation networks as shown in Fig. \ref{fig02}, one is the sensor positioning stage, during which the sensor target is located, and it can be further divided as location request stage and location response stage. The other is the observation stage, during which the non-cooperative target is located, and it can be further divided into four stage: listening, starting, detecting, and collecting. Both the sensor target and the non-cooperative target are located through the round--trip TOA to reduce ranging errors caused by clock asynchronization.

\subsubsection{Sensor node localization}
\indent During the sensor positioning stage, time division multiplexing is adopted to complete the transmission of messages between nodes by considering the sparse distribution of underwater nodes. During the location request stage, sensor target node that needs to be located broadcast the location request information in the preset time slot according to the node number order. Then, in the location response stage, reference nodes also answer the location request in sequence by node number. Once a sensor node is located, it could act as a new reference node.

\indent Assume the position of the $n$th sensor target is $P_{tr,n}=\left(x_{tr,n},y_{tr,n},z_{tr,n}\right)$, and the position of the $m$th reference node is $P_{r,m}=\left(x_{r,m},y_{r,m},z_{r,m}\right)$. At time $t_{tr,n}^sd$, the sensor target node asks for positioning, and its messages arrive at reference node $m$ at time $t_{r,m}^{rv}$. The single--trip distance observation value will be:
\begin{equation}
	\rho_{m,n}^{tr,r} = f\left(P_{tr,n},P_{r,m}\right)+e_{t}+e_{s}+e_{n},\label{eq1}
\end{equation}
where $f\left(P_{tr,n},P_{r,m}\right)$ is the real distance between sensor target $n$ and reference node $m$, $e_{t}$ is the ranging error caused by clock asynchronization, $e_{s}$ is the ranging error caused by using constant sound speed value as 1500 $m/s$, and $e_{n}$ is the error caused by environmental noise.

\indent Without losing generality, let the reference node answer the location request at time $t_{r,m}^{sd}$, and the answer message arrive at the sensor target at time $t_{tr,m}^{sd}$, the $n$th sensor target's clock lag behind the reference node $m$ with an amount of $\Delta t_{tr,r}^{n,m}$, then the round--trip ranging will be:
\begin{equation}
\begin{split}
	2\rho_{m,n}^{tr,r} = c\left(t_{r,m}^{rv}-\left(t_{tr,n}^{sd}-\Delta t_{tr,r}^{n,m}\right)\right)+\\c\left(t_{tr,m}^{rv}-\Delta t_{tr,r}^{n,m}-t_{r,m}^{sd}\right)+2e_{s}+2e_{n},\label{eq2}
\end{split}
\end{equation}
Let $\delta_s = 2e_s$, $\delta_n = 2e_n$, equation \eqref{eq2} will be:
\begin{equation}
		2\rho_{m,n}^{tr,r} = 2f\left(P_{tr,n},P_{r,m}\right)+\delta_s+\delta_n,\label{eq3}
\end{equation}
where the clock asynchronization error could be eliminated through round--trip TOA.

\indent Because the depth information of the sensor target is unknown in this paper, there should be at least 4 reference nodes to finish the localization. Assume the number of reference nodes that lays within the communication coverage of the sensor target is $\breve{m}$, there will be:
\begin{strip}
	\begin{equation}
		\begin{cases}
			\sqrt{\left(\hat x_{tr,n} - x_{r,m}\right)^2+\left(\hat y_{tr,n} - y_{r,m}\right)^2+\left(\hat z_{tr,n} - z_{r,m}\right)^2}=\rho_{m,n}^{tr,r}\\
			\sqrt{\left(\hat x_{tr,n} - x_{r,m+1}\right)^2+\left(\hat y_{tr,n} - y_{r,m+1}\right)^2+\left(\hat z_{tr,n} - z_{r,m+1}\right)^2}=\rho_{m+1,n}^{tr,r}\\
			...\\
			\sqrt{\left(\hat x_{tr,n} - x_{r,m+\breve m}\right)^2+\left(\hat y_{tr,n} - y_{r,m+\breve m}\right)^2+\left(\hat z_{tr,n} - z_{r,m+\breve m}\right)^2}=\rho_{m+\breve m,n}^{tr,r}
		\end{cases}\label{eq4}
	\end{equation}
\end{strip}
where $\hat P_{tr,n}=\left(\hat x_{tr,n},\hat y_{tr,n},\hat z_{tr,n}\right)$ is the estimated location of sensor target node. The optimization objective of the positioning model will be:
\begin{equation}
	\hat P_{tr,n} = \underset {P_{tr,n}} {\text{arg min}} \sum_{1}^{\breve m} \left(2\rho_{m+\breve m,n}^{tr,r}-2f\left(\hat P_{tr,n},P_{r,m+\breve m}\right)\right).\label{eq5}
\end{equation}

\subsubsection{non-cooperative node localization}
\indent During the observation stage, the reference sensor node detect the non-cooperative target through active sonar to eliminate the time synchronization errors. All reference nodes are started being at listening stage. Once a random reference node hears the radiated noise, it will initiate a detection process as a temporary head node. The head node first send out a detection signal and use it to awaken neighboring reference nodes, which will respectively send their detection signals for target ranging.

\indent Different from sensor node localization, the detect signals will be immediately returned upon reaching the target surface, so there is no signal processing delay on the target node, and the time asynchronization error could also be eliminated through the round--trip transmission similar to equation \eqref{eq2}. Let the $(m+1)$th reference node be the head node, the detection signal is sent out at time $t_{r,m+1}^{st}$, and return at time $t_{r,m+1}^{dt}$, then the round--trip ranging will be:
\begin{equation}
	2\rho_{m+1,n}^{nctr,r} = 2c\left(t_{r,m+1}^{dt}-t_{r,m+1}^{st}\right)+2e_s+2e_n,\label{eq6}
\end{equation}
where $nctr$ means non-cooperative target. 

\indent Let the real position of the non-cooperative target is $P_{nctr,n}=\left(x_{nctr,n},y_{nctr,n},z_{nctr,n}\right)$, where $n$ represent the $n$th non-cooperative target, equation \eqref{eq6} can also be rewritten as equation \eqref{eq3}. While the optimization object of the positioning model will be:
\begin{equation}
	\hat P_{nctr,n} = \underset {P_{nctr,n}} {\text{arg min}} \sum_{1}^{\breve m} \left(2\rho_{m+\breve m,n}^{nctr,r}-2f\left(\hat P_{nctr,n},P_{r,m+\breve m}\right)\right).\label{eq7}
\end{equation}
where $\hat P_{nctr,n}=\left(\hat x_{nctr,n},\hat y_{nctr,n},\hat z_{nctr,n}\right)$ is the estimated location of non-cooperative target node.

\indent Obviously, directly solving equations \eqref{eq5} and \eqref{eq7} based on \eqref{eq4} could not reduce the ranging error caused by the variety of sound speed, and it needs further ranging correction through ray tracing theory.

\subsection{IRTUL for Target without Depth Information}
\indent For a given SVP $\mathcal{S}=[(s_0,d_0),(s_1,d_1),...,(s_i,d_i)],i=0,1,...I$, where $d_i$ is the (i+1)th sampling depth, $s_i$ is the corresponding sound speed value, the signal propagation time and horizontal distance are provided in \cite{Wang2013UnderwaterAcoustics} as:

\begin{equation}
	t = \sum_{i=1}^{I}\left|\frac{1}{g_{i}}ln\frac{tan(\frac{\theta_i}{2}+\frac{\pi}{4})}{tan(\frac{\theta_{i-1}}{2}+\frac{\pi}{4})}\right|,\label{eq8}
\end{equation}
\begin{equation}
	h = \frac{s_0}{cos\theta_0} \sum_{i=1}^{I}\left|\frac{sin\theta_{i-1}-sin\theta_i}{g_{i}}\right|,\label{eq9}
\end{equation}
where $s_0$ is the sound speed value of the first sampling depth, $\theta_0$ is the initial grazing angle, and $g_i$ is the gradient of sound speed that satisfies:
\begin{equation}
	s_i=s_{i-1} + g_i(d_i-d_{i-1}).\label{eq10}
\end{equation}
An example of signal propagation is shown in Fig.~\ref{fig03}

\setcounter{TempEqCnt}{\value{equation}} 
\setcounter{equation}{18} 
\begin{figure*}[ht]
	\begin{equation}
		\begin{cases}
			\left(\hat h_{m+1,n}^{ir_b}\right)^2 - \left(\hat h_{m,n}^{ir_b}\right)^2 = -2\hat x_{tr,n}^{ir_b} \left(x_{r,m+1}-x_{r,m}\right) -2\hat y_{tr,n}^{ir_b} \left(y_{r,m+1}-y_{r,m}\right) + x_{r,m+1}^2 - x_{r,m}^2 + y_{r,m+1}^2 - y_{r,m}^2\\
			\left(\hat h_{m+2,n}^{ir_b}\right)^2 - \left(\hat h_{m,n}^{ir_b}\right)^2 = -2\hat x_{tr,n}^{ir_b} \left(x_{r,m+2}-x_{r,m}\right) -2\hat y_{tr,n}^{ir_b} \left(y_{r,m+2}-y_{r,m}\right) + x_{r,m+2}^2 - x_{r,m}^2 + y_{r,m+2}^2 - y_{r,m}^2\\
			...\\
			\left(\hat h_{m+\breve m,n}^{ir_b}\right)^2 - \left(\hat h_{m,n}^{ir_b}\right)^2 = -2\hat x_{tr,n}^{ir_b} \left(x_{r,m+\breve m}-x_{r,m}\right) -2\hat y_{tr,n}^{ir_b} \left(y_{r,m+\breve m}-y_{r,m}\right) + x_{r,m+\breve m}^2 - x_{r,m}^2 + y_{r,m+\breve m}^2 - y_{r,m}^2
		\end{cases}.\label{eq19}
	\end{equation}
\end{figure*}
\setcounter{equation}{20}
\begin{figure*}[ht]
	\begin{equation}
		A=\begin{bmatrix}
			x_{r,m+1}^2 - x_{r,m}^2 + y_{r,m+1}^2 - y_{r,m}^2 + \left(\hat h_{m,n}^{ir_b}\right)^2 - \left(\hat h_{m+1,n}^{ir_b}\right)^2 \\
			x_{r,m+2}^2 - x_{r,m}^2 + y_{r,m+2}^2 - y_{r,m}^2 + \left(\hat h_{m,n}^{ir_b}\right)^2 - \left(\hat h_{m+2,n}^{ir_b}\right)^2\\
			...\\
			x_{r,m+\breve m}^2 - x_{r,m}^2 + y_{r,m+\breve m}^2 - y_{r,m}^2 + \left(\hat h_{m,n}^{ir_b}\right)^2 - \left(\hat h_{m+\breve m,n}^{ir_b}\right)^2
		\end{bmatrix}.\label{eq21}
	\end{equation}
\end{figure*}
\setcounter{equation}{10}

\begin{figure}[!htbp]
	\centering
	\includegraphics[width=0.8\linewidth]{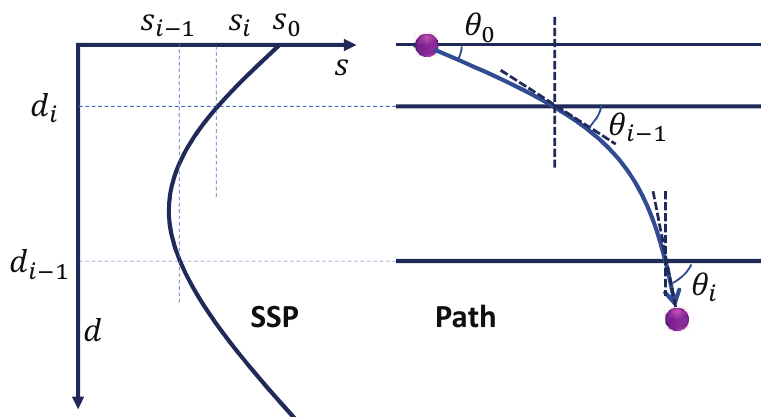}
	\caption{An example of signal propagation.}
	\label{fig03}
\end{figure}

According to the Snell's law of refraction:
\begin{equation}
n_{i-1}cos\theta_{i-1}=n_icos\theta_i,i=1,2,...,I,\label{eq11}
\end{equation}
\begin{equation}
\frac{s_i}{s_{i-1}}=\frac{n_{i-1}}{n_i},i=1,2,...,I,\label{eq12}
\end{equation}
where $n_i$ is the refractive index at the $i$th depth layer. There will be:
\begin{equation}
	\frac{s_i}{s_{i-1}}=\frac{cos_{i}}{cos_{i-1}},i=1,2,...,I.\label{eq13}
\end{equation}
Considering the trigonometric relationship:
\begin{equation}
	sin\theta_i = \sqrt{1-cos^2\theta_i},\label{eq14}
\end{equation}
\begin{equation}
	tan\left(\frac{\theta_i}{2}+\frac{\pi}{4}\right)=\frac{1+sin\theta_i}{cos\theta_i}.\label{eq15}
\end{equation}
The equation \eqref{eq8} and \eqref{eq9} can be respectively derived as:
\begin{equation}
	t = \sum_{i=1}^{I}\left|\frac{\Delta d_i}{s_{i}-s_{i-1}}ln\left(\frac{s_{i-1}}{s_i}\frac{1+\sqrt{\tau_i}}{1+\sqrt{\tau_{i-1}}}\right)\right|,\label{eq16}
\end{equation}
\begin{equation}
	h = \frac{s_0}{cos\theta_0} \sum_{i=1}^{I}\left|\frac{\Delta d_i}{s_{i}-s_{i-1}}\left(\sqrt{\tau_{i-1}}-\sqrt{\tau_{i}}\right)\right|,\label{eq17}
\end{equation}
where $\Delta d_i = d_i-d_{i-1}$,$\tau_{i} = 1-\left(\frac{s_i}{s_0}\right)^2cos^2\theta_0$. The equation \eqref{eq16} and \eqref{eq17} indicates that the signal propagation time and horizontal propagation distance can be expressed as a function of the initial grazing angle. However, the prerequisite is that the depth range of signal propagation is given. When the transmission depth is not prior information, the application of equation \eqref{eq16} and \eqref{eq17} will be restricted.

\begin{figure*}[!htbp]
	\centering
	\includegraphics[width=0.7\linewidth]{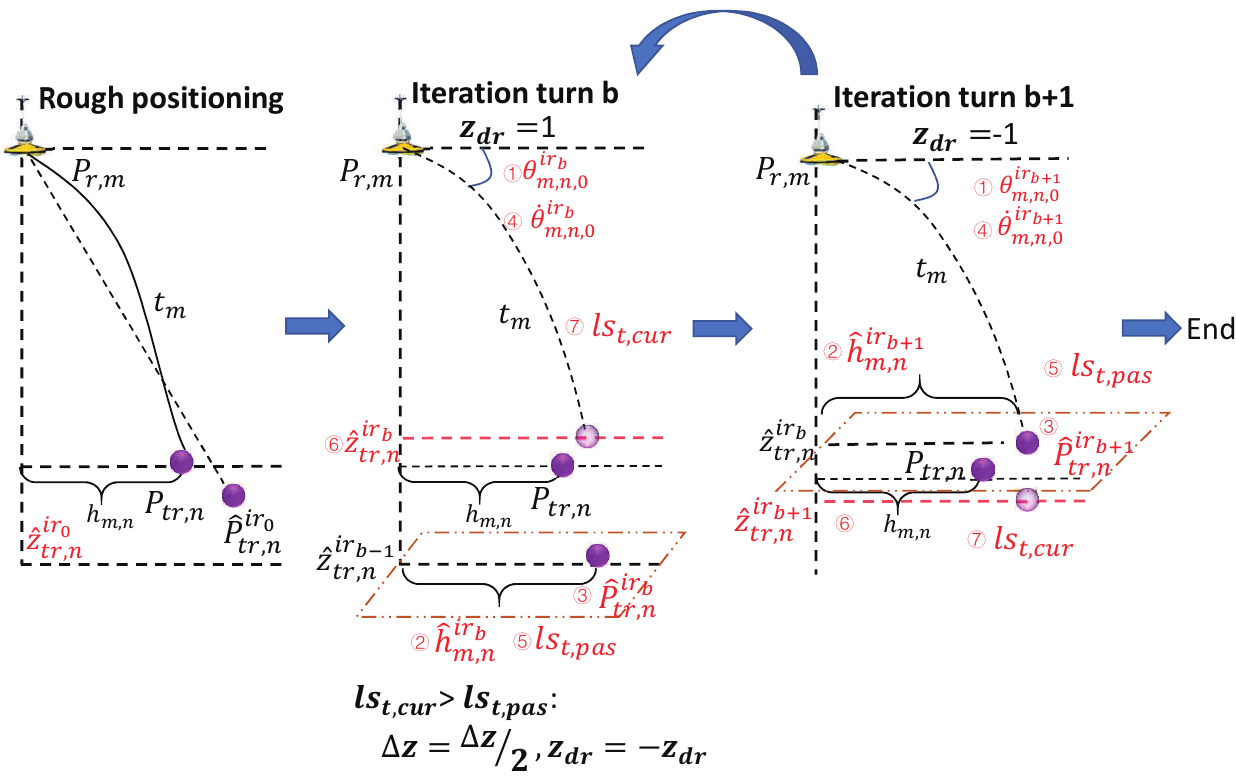}
	\caption{Scheme of IRTUL.}
	\label{fig04}
\end{figure*}
\indent To achieve accurate positioning, we propose the IRTUL, the scheme of which is shown in Fig.~\ref{fig04}. At the beginning, target node is roughly located as $\hat{P}_{tr,n}^{ir_0}$ with a linear propagation model based on the average ocean sound speed 1500 $m/s$. When coming into the $b$th turn of iteration, based on the current positioning depth $\hat{z}_{tr,n}^{ir_{b-1}}$ of the target node, the 1st step is to search the initial grazing angle $\theta_{m,n,0}^{ir_b}$ of the signal through equation \eqref{eq16} that exactly takes time $t_m$ to be transmitted from the reference node $P_{r,m}$ to the current target depth. The 2nd step is to calculate the horizontal transmitted distance $\hat h_{m,n}^{ir_b}$ of the signal through equation \eqref{eq17}. Assume there are $\breve m +1 $ reference nodes, then similar to equation \eqref{eq4}, there will be:

\begin{equation}
	\begin{cases}
		\left(\hat x_{tr,n}^{ir_b} - x_{r,m}\right)^2+\left(\hat y_{tr,n}^{ir_b} - y_{r,m}\right)^2= \left(\hat h_{m,n}^{ir_b}\right)^2\\
		\left(\hat x_{tr,n}^{ir_b} - x_{r,m+1}\right)^2+\left(\hat y_{tr,n}^{ir_b} - y_{r,m+1}\right)^2= \left(\hat h_{m+1,n}^{ir_b}\right)^2\\
		\quad \quad \quad \quad \quad \quad \quad \quad...\\
		\left(\hat x_{tr,n}^{ir_b} - x_{r,m+\breve m}\right)^2+\left(\hat y_{tr,n}^{ir_b} - y_{r,m+\breve m}\right)^2= \left(\hat h_{m+\breve m,n}^{ir_b}\right)^2
	\end{cases}.\label{eq18}
\end{equation}

\indent Subtract the first term from each sub equation in equation \eqref{eq18}, we can get equation \eqref{eq19} and \eqref{eq20}:

\setcounter{equation}{19}
\begin{equation}
	A=B\hat P_{tr,n,h}^{ir_b},\label{eq20}
\end{equation}
where $\hat P_{tr,n,h}^{ir_b} = \left(\hat x_{tr,n}^{ir_b},\hat y_{tr,n}^{ir_b}\right)$, $A$ and $B$ satisfies:

\setcounter{equation}{21}
\begin{equation}
	B=\begin{bmatrix}
		2\left(x_{r,m+1}-x_{r,m}\right) & 2\left(y_{r,m+1}-y_{r,m}\right)\\
		2\left(x_{r,m+2}-x_{r,m}\right) & 2\left(y_{r,m+2}-y_{r,m}\right)\\
		...\\
		2\left(x_{r,m+\breve m}-x_{r,m}\right) & 2\left(y_{r,m+\breve m}-y_{r,m}\right)
	\end{bmatrix}.\label{eq22}
\end{equation}

The equation \eqref{eq20} can be solved through the least square (LS) method:

\begin{equation}
	\hat P_{tr,n,h}^{ir_b} = \left(B^TB\right)^{-1}B^TA.\label{eq23}
\end{equation}

\indent The 3rd step is to form the position of target node $\hat P_{tr,n}^{ir_b} = \left(\hat x_{tr,n}^{ir_b},\hat y_{tr,n}^{ir_b},\hat z_{tr,n}^{ir_{b-1}}\right)$ after horizontal distance correction. At the 4th step, the new initial grazing angle $\dot{\theta}_{m,n,0}^{ir_b}$ is searched according to equation \eqref{eq17}. Then the signal propagation time $t_{pas,m}$ before depth tuning is simulated according to \eqref{eq16}, and time lose function is defined by the 5th step as:

\begin{equation}
	ls_{t,pas} = \sum_{m=1}^{1+\breve m}\left(t_{pas,m} - t_m\right)^2,\label{eq24}
\end{equation}
where $t_m$ is the measured signal propagation time related to reference node $m$. At 6th step, the depth will be adjusted by $\Delta z*z_{dr}$, where $z_{dr}=1$ at the beginning. The step size is controlled by $\Delta z$, while the direction is controlled by $z_{dr}$. After depth tuning, the final signal propagation time $t_{cur,m}$ is simulated again and the final lose function is calculated as:

\begin{equation}
	ls_{t,cur} = \sum_{m=1}^{1+\breve m}\left(t_{cur,m} - t_m\right)^2.\label{eq25}
\end{equation}

\indent Through comparison of $ls_{t,cur}$ and $ls_{t,pas}$, the depth tuning step and direction will be modified. For example, if $ls_{t,cur} > ls_{t,pas}$, it is indicated that the sound field matching error is increased after depth tuning, thus $\Delta z = \Delta z/2$ and $z_{dr} = -1$. The aforementioned process will continue until $ls_{t,cur} - ls_{t,pas} < \epsilon$, where $\epsilon$ is a preset threshold.

\subsection{Fast Achievement of Iterative Ray-tracing Localization}
The localization approach is based on the iterative idea, so the computational efficiency is quite important to achieve real--time positioning. In this section, we will first analysis the feasibility of fast ray tracing, then propose the fast IRTUL algorithm, and give the computational complexity.

\subsubsection{Feasibility Analysis of Fast Ray-tracing}
In IRTUL, there are two ways to speed up the algorithm, one is the setting the number of SVP's depth layers during once ray tracing, and the other is the signal ray searching method.

\paragraph{Simplification of the SVP}
\indent According to the derived ray tracing theory equation \eqref{eq16} and \eqref{eq17}, the complexity of calculating signal propagation time and horizontal propagation distance depends on the number of layers in the given SVP. Ray tracing is a model based on linear layering of SVP. Although the actual SVP is non--linear, when high--density sampling is performed on an SVP, the sound speed distribution between sampling points can be linearized to maintain high ray tracing accuracy. Therefore, if an SVP is reasonably linearized and simplified with a small number of sparse points, the computational cost during single sound ray tracing could be significantly reduced with a small amount of accuracy loss.

\begin{figure}[!htbp]
	\centering
	\includegraphics[width=0.8\linewidth]{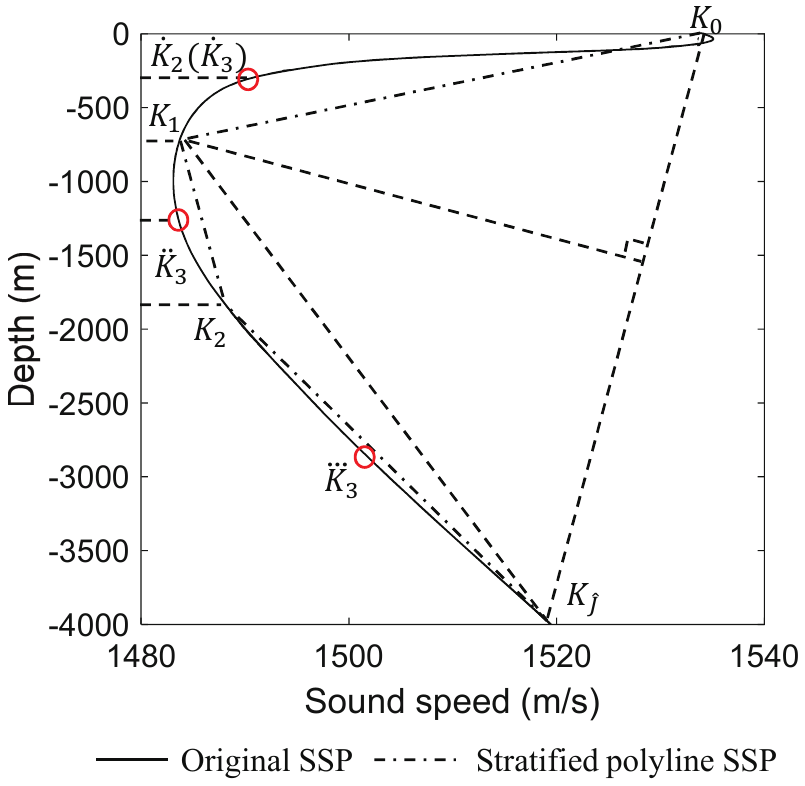}
	\caption{Simplification of SVPs.}
	\label{fig05}
\end{figure}

\indent To simplify a given SVP, we propose a simplification expression method for SVP based on maximizing distance reduction criterion as shown in Fig.~\ref{fig05}. During each turn of iteration, the candidate feature point will be searched in new sub--intervals, and the point with maximum distance to the current simplified SVP will be selected as new feature point, while the unselected point will be stored for next iteration. The convergence process of the algorithm is controlled by the number of preset feature points, or by comparing the root mean square error (RMSE) of the simplified SVP with the original SVP. The detailed introduction of the SVP simplification method can be found in our previous work by \cite{Huang2020SSPSimplification}, which is called distance-minimization-based (maximum
distance reduction) equal-interval control points searching (DM--EICPS). 

\paragraph{Optimization searching of signal grazing angle}
In our localization scheme, there are several steps in each iteration that require initial signal grazing angle searching. If it can be proven that the signal propagation time and horizontal distance are convex functions, then mature optimization searching methods can be used, so as to significantly improve computational efficiency compared to traversal searching method.

In this paper, we provide 3 properties to prove that the signal propagation time and horizontal propagation distance are monotonic functions of the initial grazing angle.

\newtheorem{property}{Property}
\begin{property}\label{1}
	For a linear layered SVP, the signal propagation path bends in the direction where sound speed decreases.
\end{property}

\begin{figure*}[!t]
	\centering
	\subfloat[]{\includegraphics[width=2.5in]{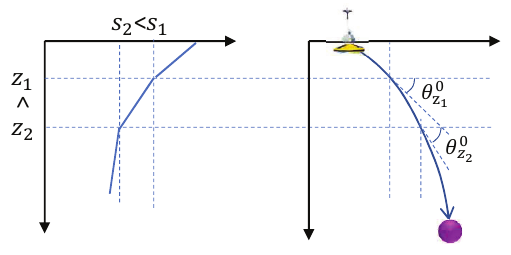}%
		\label{6(a)}}
	\hfil
	\subfloat[]{\includegraphics[width=2.5in]{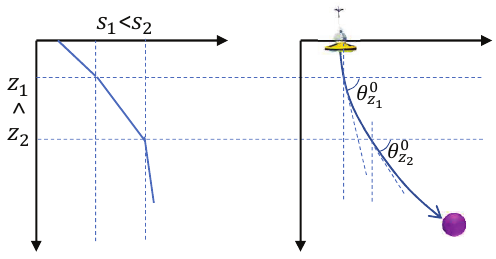}%
		\label{6(b)}}
	\caption{Bending direction of signal propagation path. (a) SVP with negative gradient. (b) SVP with positive gradient.}
	\label{fig6}
\end{figure*}

\begin{figure*}[!t]
	\centering
	\subfloat[]{\includegraphics[width=2.5in]{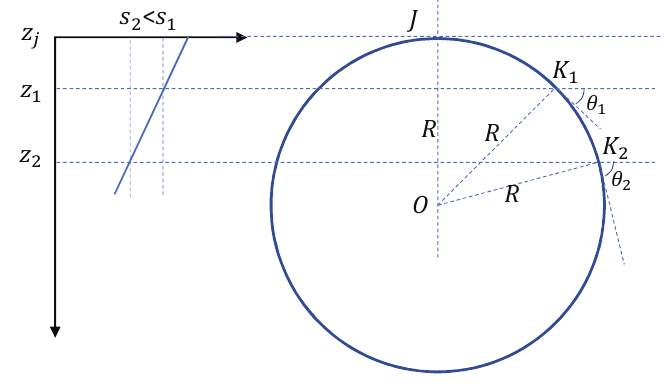}%
		\label{7(a)}}
	\hfil
	\subfloat[]{\includegraphics[width=2.5in]{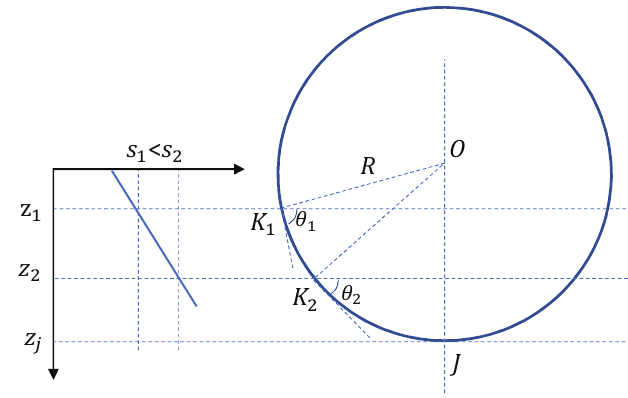}%
		\label{7(b)}}
	\caption{Proof of arc trajectory. (a) SVP with negative gradient. (b) SVP with positive gradient.}
	\label{fig7}
\end{figure*}

\begin{figure*}[!t]
	\centering
	\subfloat[]{\includegraphics[width=3.5in]{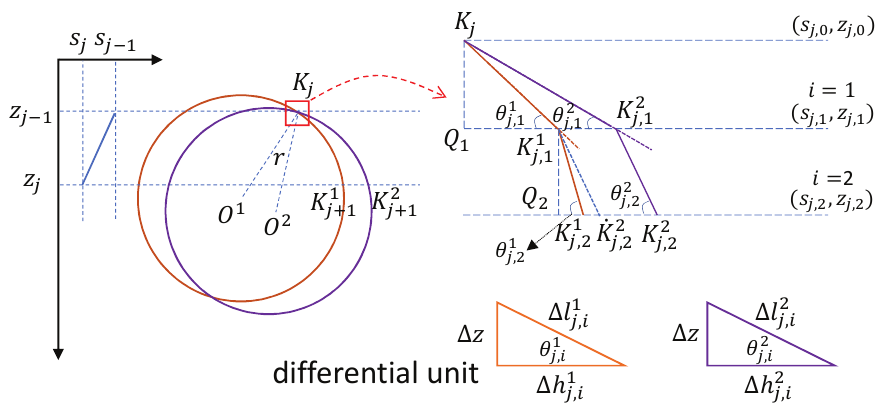}%
		\label{8(a)}}
	\hfil
	\subfloat[]{\includegraphics[width=2in]{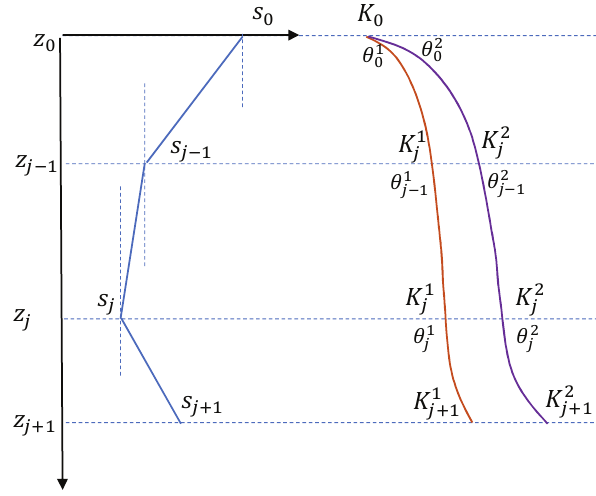}%
		\label{8(b)}}
	\caption{Proof of monotonicity. (a) Micro--element perspective. (b) Macro perspective.}
	\label{fig8}
\end{figure*}

\begin{proof} For linear layered SVP with negative gradient as shown in Fig.~\ref{6(a)} that $s_1>s_2$ and $z_1<z_2$, according to Snell's law of refraction, there is:
\begin{equation}
	s_1cos\theta_{z_2}^0 =s_2cos\theta_{z_1}^0.\label{eq26}
\end{equation}
where $cos\theta_{z_1}^0$ is the grazing angle at depth $z_1$. Since $s_1>s_2$, so $cos\theta_{z_2}^0 < cos\theta_{z_1}^0$. For $\theta_{z_2}^0,\theta_{z_1}^0 \in [0,90^\circ]$, there will be $\theta_{z_2}^0 > \theta_{z_1}^0$, which means the signal propagation path bends towards the direction where sound speed decrease (down). Similarly, for linear layered SVP with negative gradient as shown in Fig.~\ref{6(b)} that $s_1<s_2$ and $z_1<z_2$. According to equation \eqref{eq26}, since $s_1<s_2$, so $cos\theta_{z_1}^0 < cos\theta_{z_2}^0$. For $\theta_{z_2}^0,\theta_{z_1}^0 \in [0,90^\circ]$, there will be $\theta_{z_2}^0 < \theta_{z_1}^0$, which means the signal propagation path bends towards the direction where sound speed decrease (up).

\end{proof}

\begin{property}\label{2}
	For a linear layered SVP, the signal propagation path within any linear depth layer is an arc.
\end{property}

\begin{proof} (by contradiction)	As shown in Fig.~\ref{7(a)} that $s(z) = s_1 + g(z-z_1)$, where $g<0$ is the gradient of sound speed. According to property 1, the signal propagation path will be down--curved. The signal is sent out from $K_1$, and an auxiliary line in a direction perpendicular to the signal propagation direction at $K_1$ is set as $K_1O=R=-s_1/\left(gcos\theta_1\right)$. Let $K_2$ be a random point of the arc, and the angle between the tangent of the arc passing through points $K_1$ or $K_2$ and the horizontal direction are $\theta_1$ or $\theta_2$. 

\indent Assume that $K_2$ is not a point on the signal propagation trajectory, so there will be:
\begin{equation}
	\frac{s_1}{cos\theta_1}\neq\frac{s_2}{cos\theta_2}.\label{eq27}
\end{equation}
In fact, based on geometric relationships, there is:
\begin{equation}
	\begin{cases}
	z_j = z_1 + R cos\theta_1\\
	z_j = z_2 + R cos\theta_2	
	\end{cases}.\label{eq28}
\end{equation}
Substitute $R=-s_1/\left(gcos\theta_1\right)$, there will be:
\begin{equation}
	\frac{cos\theta_2}{cos\theta_1}=\frac{(z_2-z_1)g+s_1}{s_1}.\label{eq29}
\end{equation}
Since $s_2 = s_1 + g(z_2-z_1)$, there will be:
\begin{equation}
	\frac{cos\theta_2}{cos\theta_1}=\frac{s_2}{s_1},\label{eq30}
\end{equation}
which means equation \eqref{eq27} is not valid. Therefore, the signal propagation path $\widehat{K_1K_2}$ is a segment of circle arc. Similar conclusions could be drawn in Fig.~\ref{7(b)}.
\end{proof}

\newtheorem{corollary}{Corollary}
\begin{corollary}\label{3}
	For a linear layered SVP, the propagation time and horizontal propagation distance of non--reflected signals are monotonically decreasing functions of the initial grazing angle.
\end{corollary}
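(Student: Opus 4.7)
The plan is to exploit the additivity of equations \eqref{eq16} and \eqref{eq17} over depth layers and reduce the claim to a per-layer statement, then use Snell's law to control the grazing angle pointwise as a function of $\theta_0$. Since $t=\sum_i t_i$ and $h=\sum_i h_i$ where $t_i,h_i$ denote the contributions from the $i$th linear layer, it suffices to prove that for each fixed layer (i.e.\ fixed endpoints $d_{i-1}<d_i$ and fixed linear profile $s(z)=s_{i-1}+g_i(z-d_{i-1})$), the two quantities $t_i$ and $h_i$ are strictly decreasing functions of the initial grazing angle $\theta_0$ measured at the source depth.

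The key reduction is to rewrite the layer contributions in the micro-element form suggested by Fig.~\ref{8(a)}. Along the arc, an increment $ds$ of path produces $dz=\sin\theta(z)\,ds$ of vertical travel and $dh=\cos\theta(z)\,ds$ of horizontal travel, while the elapsed time is $dt=ds/s(z)$. Substituting $ds=dz/\sin\theta(z)$ gives
\begin{equation}
t_i=\int_{d_{i-1}}^{d_i}\frac{dz}{s(z)\sin\theta(z)},\qquad h_i=\int_{d_{i-1}}^{d_i}\cot\theta(z)\,dz,
\end{equation}
where $\theta(z)$ is the local grazing angle. By Snell's law \eqref{eq11}--\eqref{eq13} applied at each depth, $\cos\theta(z)=(s(z)/s_0)\cos\theta_0$, so setting $c:=\cos\theta_0$ yields
\begin{equation}
\sin\theta(z)=\sqrt{1-\bigl(s(z)/s_0\bigr)^{2}c^{2}},\qquad \cot\theta(z)=\frac{(s(z)/s_0)\,c}{\sqrt{1-\bigl(s(z)/s_0\bigr)^{2}c^{2}}}.
\end{equation}
Restriction to non-reflected rays is exactly the condition $(s(z)/s_0)^{2}c^{2}<1$ on the whole interval, so both integrands are smooth and positive. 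As $\theta_0$ increases, $c$ decreases, whence $\sin\theta(z)$ increases and $\cot\theta(z)$ decreases pointwise in $z$. Therefore $1/(s(z)\sin\theta(z))$ and $\cot\theta(z)$ both decrease pointwise, and passing to the integral gives $dt_i/d\theta_0<0$ and $dh_i/d\theta_0<0$. Summing over $i=1,\dots,I$ yields the corollary.

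To make the argument rigorous against the closed forms \eqref{eq16} and \eqref{eq17} actually used by the paper, I would verify the above integral representation is identical to those expressions by carrying out the substitution $u=(s(z)/s_0)c$ and invoking the linearity $dz=ds_z/g_i$; this turns $h_i$ into the elementary integral of $u/\sqrt{1-u^{2}}$ (giving the $\sqrt{\tau_{i-1}}-\sqrt{\tau_i}$ term) and $t_i$ into the integral of $1/(u\sqrt{1-u^{2}})$ (giving the logarithm), matching the authors' formulas. The macro perspective in Fig.~\ref{8(b)} provides an independent cross-check: in each layer the path is a circular arc of radius $R_i=-s_{i-1}/(g_i\cos\theta_{i-1})$ by Property \ref{2}, and a raised grazing angle increases $|\sin\theta|$ at both endpoints, which geometrically shortens the chord's horizontal component and the arc length.

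The main obstacle I anticipate is not the monotonicity itself, which is transparent in the micro-element form, but justifying the \emph{strict} decrease uniformly across the non-reflected regime and ruling out the boundary case where a layer contains a turning point (where $\sin\theta(z)\to 0$). I would handle this by restricting to the strictly non-reflected cone $\{\theta_0:(s_{\max}/s_0)\cos\theta_0<1\}$, on which the integrands are bounded and $C^{1}$ in $\theta_0$, so differentiation under the integral sign is legal and the pointwise strict inequalities propagate to the layer sums; the corollary then follows for every admissible $\theta_0$.
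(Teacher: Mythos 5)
Your proof is correct and follows essentially the same route as the paper's: both reduce the claim to a per-depth-layer comparison and use Snell's law to show that the local grazing angle at every depth is monotone in $\theta_0$, so the time and horizontal-distance increments are pointwise monotone --- the paper does this via a discrete micro-element sum with an induction over segments before passing to the limit in \eqref{eq33}, whereas you work with the resulting integrals directly. Your explicit treatment of the non-reflected/turning-point condition and the consistency check against \eqref{eq16}--\eqref{eq17} tighten points the paper's proof leaves implicit.
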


\begin{proof}{corollary 1:} Assuming that the gradient of an SVP is negative, and the signal shoots from shallow water to deep water. Divide the signal trajectory with different initial grazing angles into $I$ small segments for the $i$th depth layer as shown in Fig.~\ref{8(a)}. There is:
\begin{equation}
	z_{j-1}=z_{j,0}<z_{j,1}<...<z{j,i}<...<z_{j,I}=z_j, \label{eq31}
\end{equation}
where $i=0,1,...,I$. Since the gradient of sound speed is negative, so there is $s_{j,i} < s_{j,i-1}$. Let the depth is equally divided, and the space is $\Delta z$. If $\Delta z$ is small enough, the propagation path of the signal in this depth space can be approximated by a straight line, so the propagation time of the signal is:
\begin{equation}
	\Delta t_{j,i} = \frac{\Delta l_{j,i}}{\bar{s_i}}=\frac{\sqrt{\Delta \left(h_{j,i}\right)^2+\Delta z^2}}{\bar{s_i}}, \label{eq32}
\end{equation}
where $\bar{s_i}=0.5\left(s_{i-1}+s_{i}\right)$. According to the equivalence relationship between definite integral and summation limit:
\begin{equation}
	\begin{split}
	\begin{aligned}
	\Delta t_{j} &=\lim_{\Delta z\rightarrow0}\sum_{i=1}^{I}\Delta t_{j,i} \\
	&=\int_{K_j}^{K_{j+1}}dt_j=\int_{K_j}^{K_{j+1}}\frac{dl_j}{\bar{s_i}}\\
	&=\lim_{\Delta z\rightarrow0}\sum_{i=1}^{I}\frac{\Delta l_j}{\bar{s_i}} = \lim_{\Delta z\rightarrow0}\sum_{i=1}^{I}\frac{\sqrt{\Delta \left(h_{j,i}\right)^2+\Delta z^2}}{\bar{s_i}}
	\end{aligned}
	\end{split}.\label{eq33}
\end{equation}
For the two paths shown in Fig.~\ref{8(a)}, if $\Delta h_{j,i}^1<\Delta h_{j,i}^2$ is proven, then there will be $t_j^1 < t_j^2$ and $h_j^1 < h_j^2$.

\indent When $i=1$, let the grazing angles of two paths are $\theta_{j,1}^1$ and $\theta_{j,1}^2$, respectively, which satisfies $\theta_{j,1}^1>\theta_{j,1}^2$. Because $0<\theta_{j,1}^1,\theta_{j,1}^2<90^\circ$, the $\theta_{j,1}^1,\theta_{j,1}^2$ will satisfy $tan\theta_{j,1}^1>tan\theta_{j,1}^2$, then:

\begin{equation}
	\Delta h_{j,1}^1 = \frac{\Delta z}{tan\theta_{j,1}^1}<\frac{\Delta z}{tan\theta_{j,1}^2}=\Delta h_{j,1}^2, \label{eq34}
\end{equation}

\indent For $i=2$, according to the Snell's law:
\begin{equation}
	\frac{s_{j,1}}{s_{j,2}}=\frac{cos\theta_{j,1}^1}{cos\theta_{j,2}^1}=\frac{cos\theta_{j,1}^2}{cos\theta_{j,2}^2}. \label{eq35}
\end{equation}
Since $\theta_{j,1}^1>\theta_{j,1}^2$, so $cos\theta_{j,1}^1<cos\theta_{j,1}^2$ and $cos\theta_{j,2}^1<cos\theta_{j,2}^2$, which indicates that at the $2$nd depth layer, there is still $\theta_{j,2}^2>\theta_{j,2}^1$. When repeating the proof process for $i=2,3,4,...,I$, there will be $\Delta h_{j,i}^1<\Delta h_{j,i}^2$. Finally, we get $t_j^1 < t_j^2$ and $h_j^1 < h_j^2$.

\indent Similarly, it can be proven that for SVP layers with positive gradient, the propagation time and horizontal propagation distance of signals are also monotonically decreasing functions of the initial grazing angle.

\indent For SVPs with multiple linear layers as shown in Fig.~\ref{8(b)}, if $\theta_0^1 > \theta_0^2$, there will be $t_j^1 < t_j^2$ and $h_j^1 < h_j^2$ based on the above proof process. Finally, there is:
\begin{equation}
	t^1 = \sum_{j=1}^Jt_j^1< \sum_{j=1}^Jt_j^2 = t^2, \label{eq36}
\end{equation}
\begin{equation}
	h^1 = \sum_{j=1}^Jh_j^1< \sum_{j=1}^Jh_j^2 = h^2, \label{eq37}
\end{equation}
which proves the corollary 1.
\end{proof}

\subsubsection{Fast Algorithm of IRTUL}
Based on the aforementioned analysis, the initial grazing angle can be searched through dichotomy, and the ray tracking process can be accelerated through SVP simplification, which helps achieve the fast operation of IRTUL. In this section, we give the fast algorithm of IRTUL as shown in Algorithm 1.

\begin{algorithm}[H]
	\caption{Fast IRTUL Algorithm.}\label{alg:alg1}
	\begin{algorithmic}
		\STATE {\textsc{INPUT:}}
		\STATE \hspace{0.5cm}Simplified SVP $\tilde{\mathcal{S}}=[(\tilde{s_0},\tilde{d_0}),...,(\tilde{s_i},\tilde{d_i})],i=0,1,...,I$; reference nodes $P_{r,m} = (x_{r,m},y_{r,m},z_{r,m})$; measured signal propagation time $t_{m,n}$; threshold value of horizontal propagation distance $Th_h$; threshold value of signal propagation time $Th_t$; threshold of depth tuning step $Th_{\Delta z}$.
		\STATE {\textsc{Step 1:}}
		\STATE \hspace{0.5cm} Obtain ranging based on linear propagation model with average sound speed $\bar{s}$: $\rho_{m,n} = \bar{s}t_m$; conduct ball intersection positioning to obtain rough estimated position $\hat{P}_{tr,n}^{ir_0}$ with depth $z_{tr,n}^{ir_0}$; initialize the iteration as $b=1$.
		\STATE {\textsc{Step 2:}}
		\STATE \hspace{0.5cm} {\textsc{\textbf{WHILE}}} iteration b, and $\Delta z_b < TH_{\Delta z}$
		\STATE {\textsc{Step 3:}}
		\STATE \hspace{0.5cm} According to Equation \eqref{eq16}, use dichotomy to calculate the signal propagation time $t_{m,n}^{ir_0}$ for each reference node when signal is transmitted to depth $z_{tr,n}^{ir_b-1}$, when $\left\|t_{m,n}^{ir_0} - t_{m,n}\right\|_2<Th_t$ , record the corresponding signal grazing angle $\theta_{m,n,0}^{ir_b}$.
		\STATE {\textsc{Step 4:}}
		\STATE \hspace{0.5cm} Based on $\theta_{m,n,0}^{ir_b}$, calculate the horizontal propagation distance $\hat h_{m,n}^{ir_b}$ of signal according to \eqref{eq17}.
		\STATE {\textsc{Step 5:}}
		\STATE \hspace{0.5cm} Based on $\hat h_{m,n}^{ir_b}$, relocate the target at depth $z_{tr,n}^{ir_b-1}$ to get new position $\hat{P}_{tr,n}^{ir_b}$ with new horizontal distance as $\hat {\dot{h}}_{m,n}^{ir_b}$.
		\STATE {\textsc{Step 6:}}
		\STATE \hspace{0.5cm} Based on $\hat {\dot{h}}_{m,n}^{ir_b}$, search for the grazing angle $\dot \theta_{m,n,0}^{ir_b}$ by dichotomy when calculated distance $\hat {\ddot{h}}_{m,n}^{ir_b}$ according to equation \eqref{eq17} satisfies $\left\|\hat {\dot{h}}_{m,n}^{ir_b} -\hat {\ddot{h}}_{m,n}^{ir_b} \right\|_2<Th_h$.
		\STATE {\textsc{Step 7:}}
		\STATE \hspace{0.5cm} Based on $\dot \theta_{m,n,0}^{ir_b}$, calculate the signal propagation time $\hat{\dot t_{m,n}^{ir_b}}$ according to \eqref{eq16}.
		\STATE {\textsc{Step 8:}}
		\STATE \hspace{0.5cm} Calculate the lost function \eqref{eq24}.
		\STATE {\textsc{Step 9:}}
		\STATE \hspace{0.5cm} \textbf{If} $z_{dr} = 1$ \textbf{then} $\hat z_{tr,n}^{ir_b} = z_{tr,n}^{ir_{b-1}} + \Delta z$.
		\STATE \hspace{0.5cm} \textbf{Else} $\hat z_{tr,n}^{ir_b} = z_{tr,n}^{ir_{b-1}} - \Delta z$.
		\STATE {\textsc{Step 10:}}
		\STATE \hspace{0.5cm} Calculate the lost function \eqref{eq25}.
		\STATE {\textsc{Step 11:}}
		\STATE \hspace{0.5cm} \textbf{If} $ls_{t,cur}>ls_{t,pas}$ \textbf{then} $\Delta z = \Delta z/2, z_{dr} = -z_{dr}$.
		\STATE {\textsc{Step 12:}}
		\STATE \hspace{0.5cm} Record the current target position as $\hat{P}_{tr,n} = (\hat x_{tr,n},\hat y_{tr,n},\hat z_{tr,n})$ according to the smaller one between $ls_{t,cur}$ and $ls_{t,pas}$, $b=b+1$.
		\STATE \hspace{0.5cm} \textbf{END WHILE}
		\STATE {\textsc{OUTPUT:}} Final target position $\hat{P}_{tr,n} = (\hat x_{tr,n},\hat y_{tr,n},\hat z_{tr,n})$.
	\end{algorithmic}
	\label{alg1}
\end{algorithm}

\section{Experimental Results}
\subsection{Parameter Settings}
To verify the accuracy and efficiency performance of the proposed IRTUL algorithm, we conduct simulation experiment in this section for testing. The simulation is conducted by Matlab (2023a). The simulation experimental parameters are shown in Table \ref{table1}. The system error for measuring signal propagation time is set to 0.1\% of the distance range, following zero-mean$^2$ white Gaussian nose with standard deviation $\sigma$ according to \cite{Liu2016JSL,Carroll2014Ondemand}. Thus when the average sound speed is assumed to be 1500 $m/s$, the equivalent signal propagation time error will be $\sigma_t = 0.003 s$. 
\begin{table}[!htbp]
	\caption{Parameter setting\label{table1}}
	\centering
	\begin{tabular}{|c||c|}
		\hline
		Item & Value\\
		\hline
		communication range & 4500 $m$\\
		\hline
		area square & 10 $km\times$ 10 $km$ \\
		\hline
		depth & 3 $km$\\
		\hline
		surface buoys & 25\\
		\hline
		anchor nodes & 25\\
		\hline
		target nodes to be located & 200\\
		\hline
		mean error (time) & 0\\
		\hline
		standard deviation error (time) $\sigma_t$ & 0.003 $s$\\
		\hline
		number of simplification layers & 7\\
		\hline
		threshold of depth tuning step & 0.2 $m$\\
		\hline
		threshold of signal propagation time $Th_t$ & 10 $\mu s$\\
		\hline
		threshold of horizontal propagation distance $Th_h$ & 0.1 $m$\\
		\hline
		initial depth step & 2 $m$\\
		\hline
	\end{tabular}
\end{table}

\begin{figure}[!htbp]
	\centering
	\includegraphics[width=0.8\linewidth]{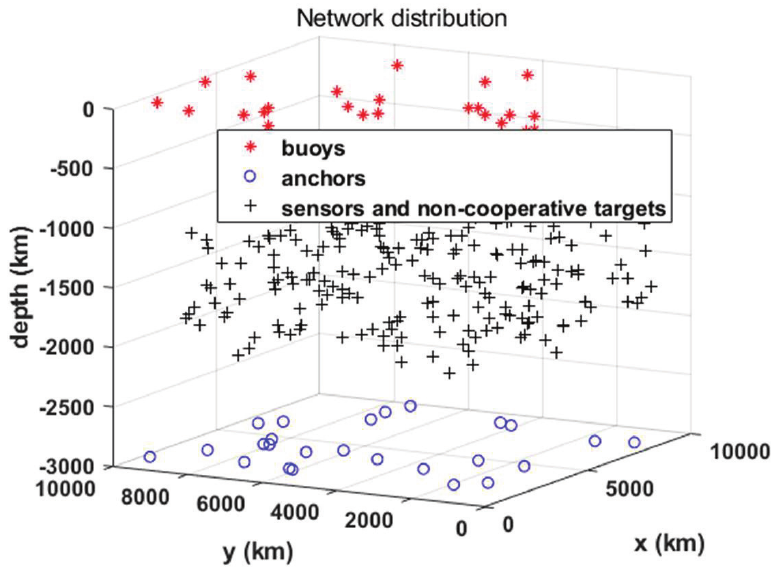}
	\caption{Simulation scenario of network.}
	\label{fig09}
\end{figure}

\indent The simulation scenario is provided in Fig.\ref{fig09}, where 25 surface buoys are uniformly distributed, and their positioning coordinates are obtained by GPS. 25 anchors are also uniformly distributed at the bottom, and their location is known (obtained through positioning by buoys). 200 nodes to be located are randomly suspended in water.

\subsection{Simulation Results}

\begin{figure}[!t]
	\centering
	\subfloat[]{\includegraphics[width=2.5in]{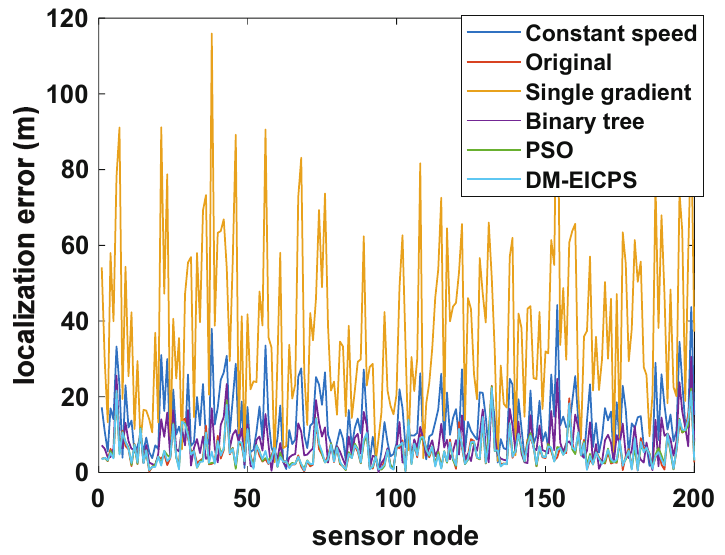}%
		\label{10(a)}}
	\hfil
	\subfloat[]{\includegraphics[width=2.5in]{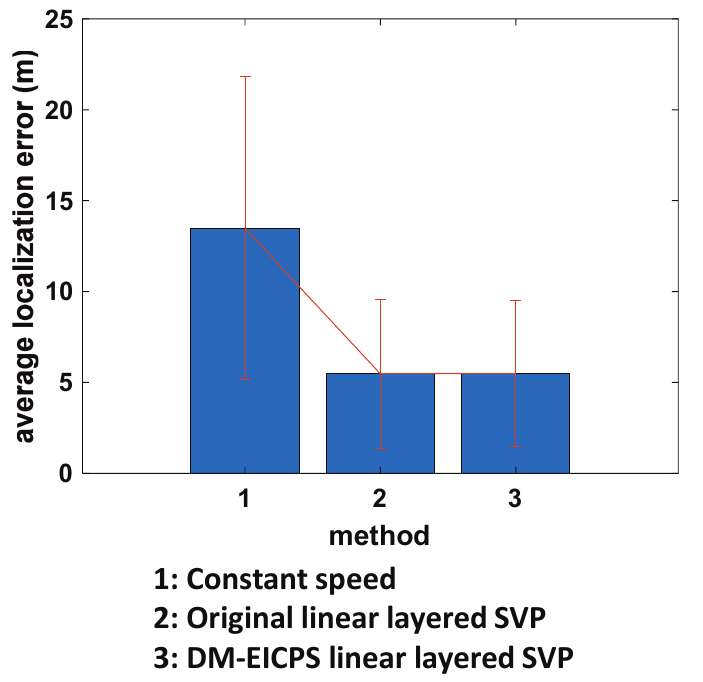}%
		\label{10(b)}}
	\caption{Accuracy perpformance with different kinds of SVP simplification method. (a) 200 nodes. (b) Average results.}
	\label{fig10}
\end{figure}

\indent To verify the feasibility of using simplified SVPs in positioning system, we provide localization accuracy results for different simplified methods of SVP as shown in Fig.~\ref{fig10}. The positioning RMSE errors of total 200 sensors or non-cooperative targets are given in Fig.~\ref{10(a)}, where the curve of proposed DM--EICPS is almost at the bottom, indicating that the simplified SVP by DM--EICPS for positioning has better accuracy performance. The average and strandard deviation of localization errors by constant speed, original linear SVP, and DM-EICPS simplified SVP is shown in Fig.~\ref{10(b)}. It shows that the positioning error of the simplified SVP is slightly greater than the result using the original SVP, but the difference is very small. If the number of layers in the simplified SVP is further increased, the positioning results will become closer to the positioning results by using the original SVP.

\begin{table*}[!htbp]
	\caption{Average positioning RMSE and increase in distance correction\label{table2}}
	\centering
	\begin{tabular}{|c||c||c||c||c||c||c||c||c|}
		\hline
		constant speed (m)& original SVP (m)& x (m)& y (m)& z (m)& SVP by DM--EICPS &  x (m)& y (m)& z (m)\\
		\hline
		8.435 & 5.591 & 1.499 & 2.189 & 5.087 & 5.676 & 1.959 & 2.789 & 5.134 \\
		\hline
	\end{tabular}
\end{table*}

\indent Table \ref{table2} gives the average localizaiton RMSE of 20 nodes and increase of distance correction in the direction of x, y, and z, respectively. The results show that compared to the positoning model with constant sound speed, IRTUL has the most significant distance correction in the depth direction. The average accuracy of IRTUL has been improved by about 3 $m$.

\begin{figure}[!t]
	\centering
	\subfloat[]{\includegraphics[width=2.5in]{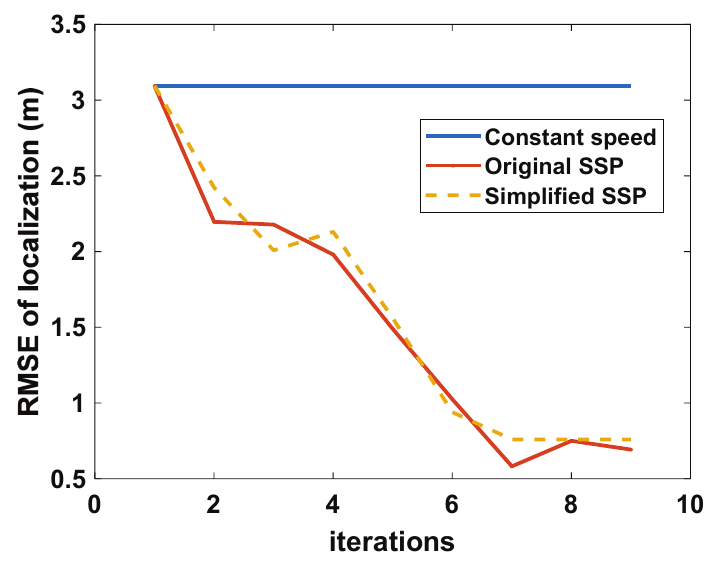}%
		\label{11(a)}}
	\hfil
	\subfloat[]{\includegraphics[width=2.5in]{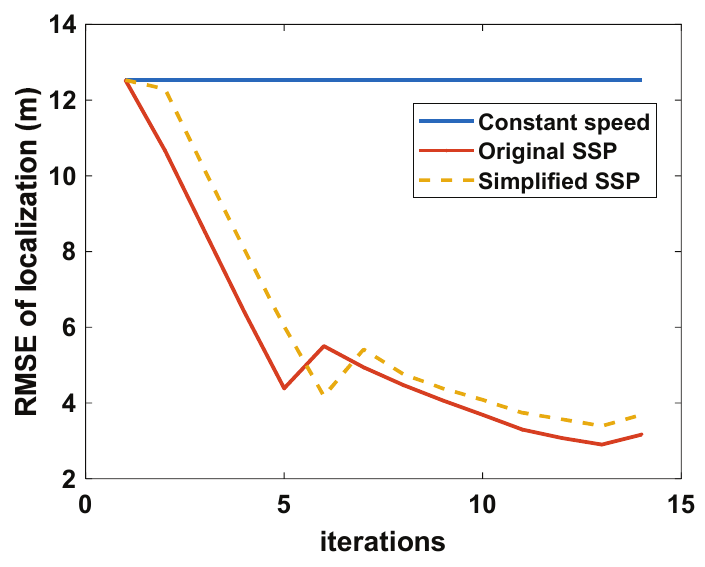}%
		\label{11(b)}}
	\caption{Convergence performance of ITRUL. (a) Node 1. (b) Node 2.}
	\label{fig11}
\end{figure}

\indent In order to verify the convergence stability of the IRTUL method, Fig.~\ref{fig11} shows the convergence of two sets of data. It can be seen that when the number of iterations exceeds 7 and 10, respectively in Fig.~\ref{11(a)} and Fig.~\ref{11(b)}, the positioning process has shown convergence, and the positioning accuracy has been significantly improved compared to the constant speed situation.

\begin{figure}[!htbp]
	\centering
	\includegraphics[width=0.8\linewidth]{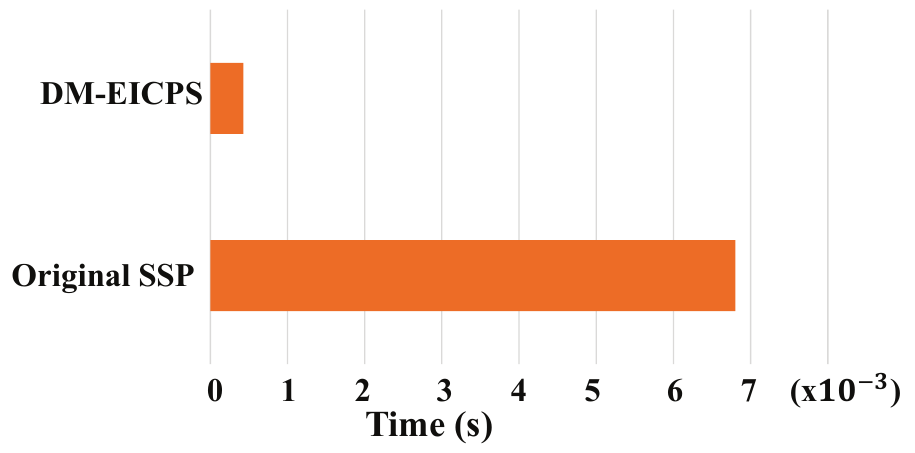}
	\caption{Time efficiency of ITRUL.}
	\label{fig12}
\end{figure}

Although the positioning accuracy using a simplified SVP is slightly lower than that using the original SVP, the former can greatly reduce the computational complexity of ray tracing and the time cost of the positioning correction algorithm. To verify the time effectiveness of the positioning correction algorithm using a simplified SVP, Fig.~\ref{fig12} compares the average time of positioning process using a simplified SVP with that using the original SVP under 10 test times. The results show that the time improvement effect is very significant.

\section{Conclusion}
In this paper, we propose an IRTUL method for fast underwater target localization, which address the issue of sound line bending and positioning accuracy degradation caused by uniformly distributed sound speed. Firstly, the monotonic relationship between signal propagation time and horizontal propagation distance with respect to the initial grazing angle is proved, so that binary search can be used to quickly obtain the initial grazing angle of the sound line. Then, we establish the IRTUL model. By using ray tracing and depth iterative tuning, the real signal propagation path of the signal can be effectively tracked in the case of unknown target depth, which improves the positioning accuracy. To reduce the computational complexity of ray--theory--based sound correction process, an adaptive linear simplification method for SVPs based on minimizing the maximum distance criterion is proposed as DM--EICPS. Compared with traditional simplification method for contour feature extraction, DM-EICPS exhibits higher positioning accuracy while significantly reduce the computational cost compared to the positioning with original SVPs.
 
\section*{Acknowledgments}
This work was supported by Natural Science Foundation of Shandong Province (ZR2023QF128), China Postdoctoral Science Foundation (2022M722990 and 2022M723888), Qingdao Postdoctoral Science Foundation (QDBSH20220202061), National Natural Science Foundation of China (62271459), National Defense Science and Technology Innovation Special Zone Project: Marine Science and Technology Collaborative Innovation Center (22-05-CXZX-04-01-02), Central University Basic Research Fund of China, Ocean University of China (202313036).

\bibliographystyle{IEEEtran}
\bibliography{IEEEabrv,draft_hw}

\begin{IEEEbiography}[{\includegraphics[width=1in,height=1.25in,clip,keepaspectratio]{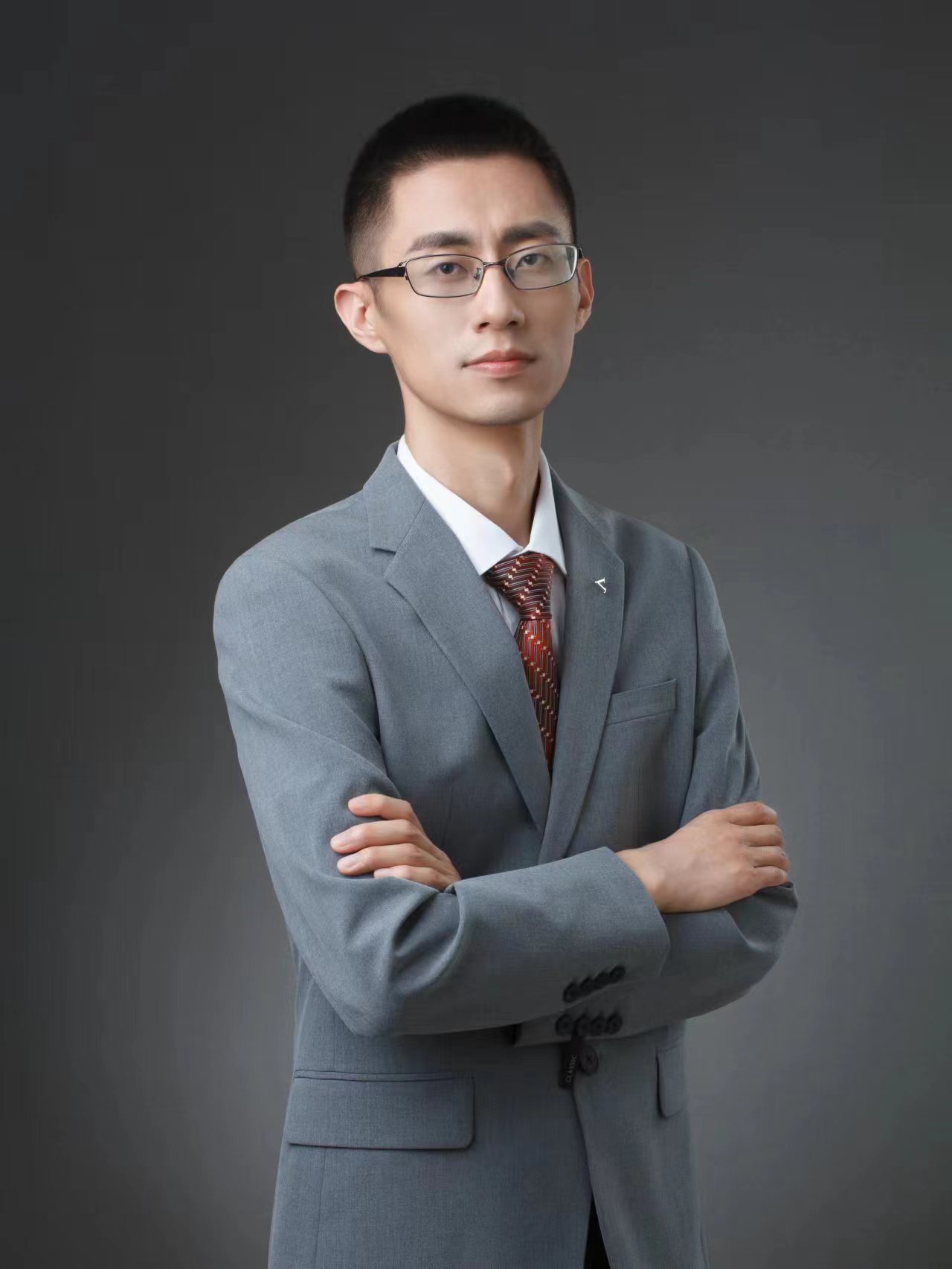}}]{Wei Huang} (IEEE S'18-M'22) received the Ph.D. degree in communication and information system with the School of Electronic Information from Wuhan University, China in 2021. He has published more than 10 SCI/EI research papers.He is now a lecturer and postdoctor at the Faculty of Information Science and Engineering, Ocean University of China.

His current research interests include underwater acoustic tomography, underwater acoustic communication and localization system, and underwater intelligent data processing.
\end{IEEEbiography}

\begin{IEEEbiography}[{\includegraphics[width=1in,height=1.25in,clip,keepaspectratio]{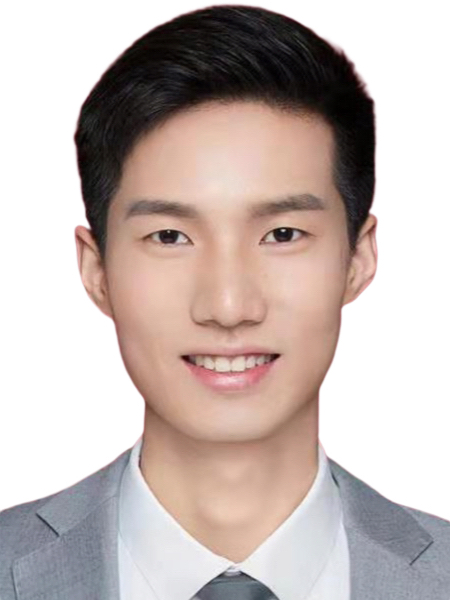}}]{Kaitao Meng} (Member, IEEE) received the B.E. and the Ph.D degrees from the School of Electronic Information, Wuhan University, Wuhan, China, in 2016 and 2021, respectively. From 2021 to 2023, he was a Postdoctoral Researcher in the State Key Laboratory of Internet of Things for Smart City, University of Macau, Macau, China. He is currently a Marie Skłodowska-Curie Actions (MSCA) Postdoctoral Fellow with the Department of Electronic and Electrical Engineering, University College London, U.K. His current research interests mainly include integrated sensing and communication, cooperative sensing, intelligent surfaces, and multi-UAV collaboration. He has served as a TPC Member for many IEEE conferences, such as GLOBECOM and VTC.
\end{IEEEbiography}

\begin{IEEEbiography}[{\includegraphics[width=1in,height=1.25in,clip,keepaspectratio]{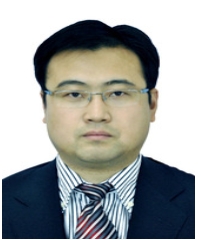}}]{Hao Zhang} (Senior Member, IEEE) received the bachelor's degree in communication engineering from Shanghai Jiao Tong University in 1994, and received the Ph.D. degree in Electronic Engineering from University of Victoria in 2004. He is now a professor and the director of the institute of Ocean Communication at Ocean University of China, and a visting professor at the University of Victoria, Canada. He is one of the Leading Talents in the National "Ten Thousand Talents Plan", he is also one of the Outstanding Talents in the New Century by the Ministry of Education, and a winner of Outstanding Young Scholars in Natural Science in Shandong Province. 
	
He has published more than 150 SCI/EI research papers. His current interests include underwater signal processing, wireless communication, navigation and communication of satellite system, underwater sensor networks.
\end{IEEEbiography}

\begin{IEEEbiography}[{\includegraphics[width=1in,height=1.25in,clip,keepaspectratio]{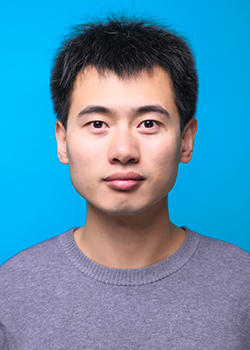}}]{Fan Gao} is now an associate professor with the School of Space Science and Physics, Shandong University. He received his Ph.D. degree in Geodesy and Surveying Engineering from the University of Chinese Academic of Sciences in 2016. His research interests include GNSS-R altimetry, GNSS software-defined receiver, satellite constellation design, precise orbit determination, and underwater navigation.
\end{IEEEbiography}

\begin{IEEEbiography}[{\includegraphics[width=1in,height=1.25in,clip,keepaspectratio]{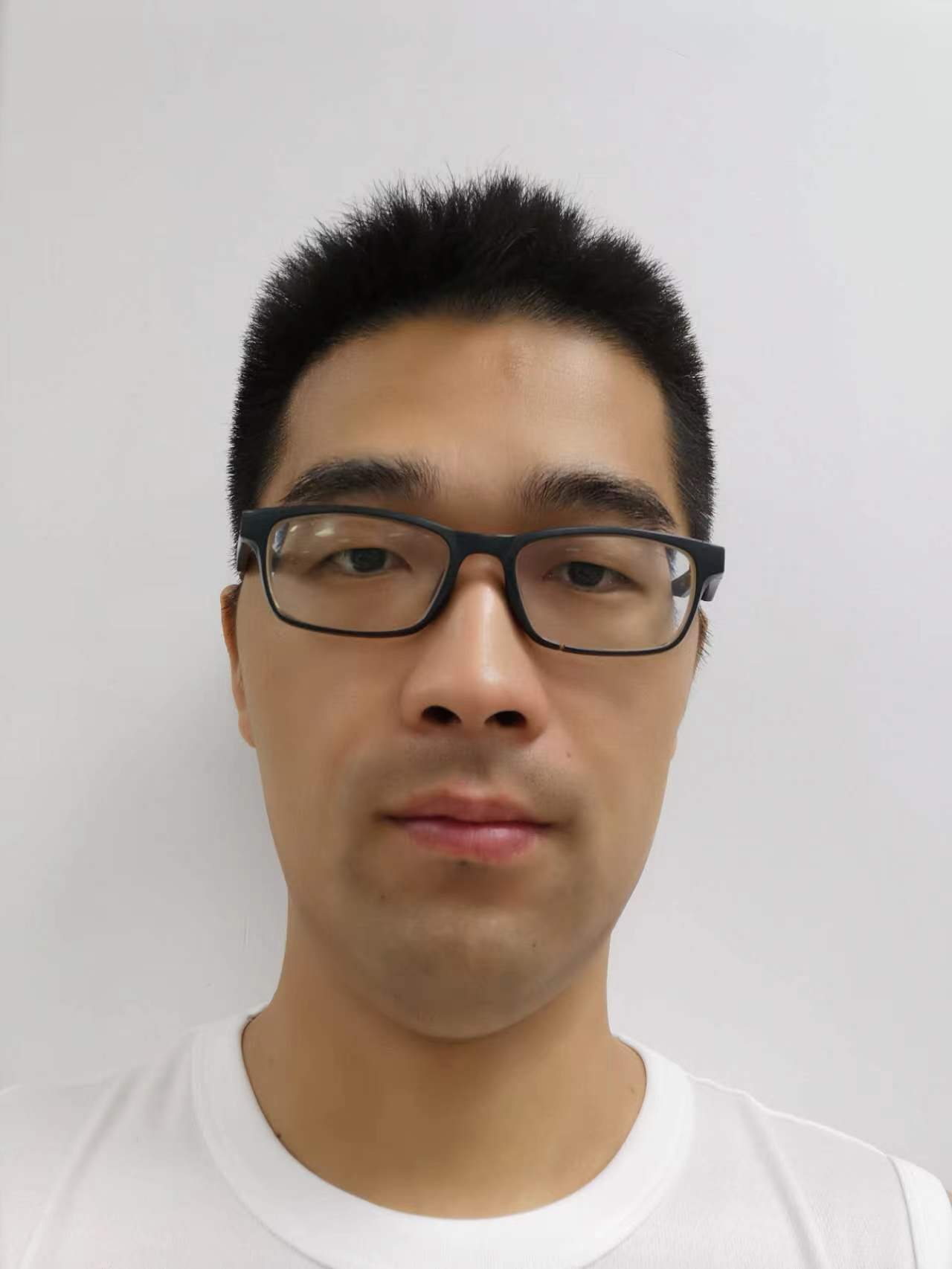}}]{Wenzhou Sun} received the bachelor's degree in Port Channel and Coastal Engineering from Hohai University in 2013, and received the Ph.D. degree in Ocean Surveying and Mapping from Dalian Naval College in 2019. He is now a postdoctor at the State Key Laboratory of Geographic Information. His current research direction is ocean geodesy.
\end{IEEEbiography}

\begin{IEEEbiography}[{\includegraphics[width=1in,height=1.25in,clip,keepaspectratio]{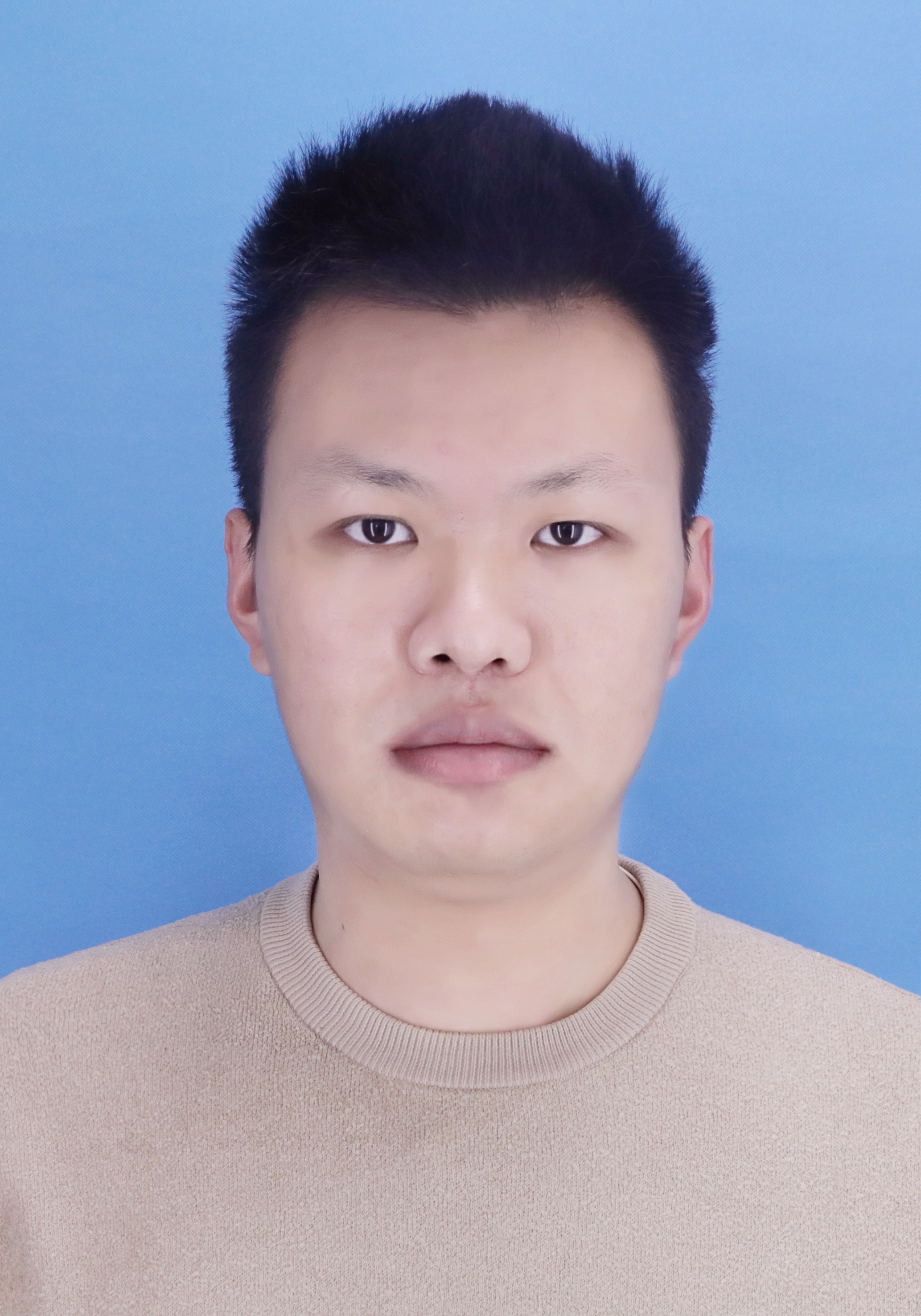}}]{Jianxu Shu} received his M.S. degrees in Geodesy from Chang'an University in 2023. He is now a Ph.D. candidate in Shandong University (Weihai), China. His research interests include underwater localizaiton and navigation.
\end{IEEEbiography}

\begin{IEEEbiography}[{\includegraphics[width=1in,height=1.25in,clip,keepaspectratio]{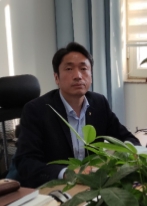}}]{Tianhe Xu} received his Ph.D. and M.S. degrees in Geodesy from the Zhengzhou Institute of Surveying and Mapping of China in 2004 and 2001. He is now a professor at the Institute of Space Sciences in Shandong University, Weihai. He is a chief scientist of National Key Research and Development Program of China. He has published more than 150 SCI/EI papers with citation of more than 3000 times. He has received 2 provincial and ministerial level special awards for scientific and technological progress, 3 first prizes, 7 second prizes, and 1 third prize.	His research interests include satellite navigation, orbit determination, satellite gravity data processing and quality control.
\end{IEEEbiography}

\begin{IEEEbiography}[{\includegraphics[width=1in,height=1.25in,clip,keepaspectratio]{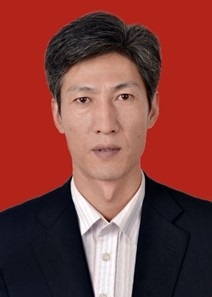}}]{Deshi Li} received his Ph.D. degree in Computer Application Technology from Wuhan University. He has been a visiting scholar with the Network Lab of the University of California at Davis. He is a professor and dean of the Electronic Information School, Wuhan University. Currently, he serves as a member of the Internet of Things Expert Committee, a member of the Education Committee of the Chinese Institute of Electronics, and the Associate Chief Scientist in the area of Space Communication at the Collaborative Innovation Center of Geospatial Technology. 
	
	He has published more than 100 research papers in related areas. His current research interests include wireless communication, the Internet of Things, sensor networks, intelligent systems and SOC design. He is the co-designer of this work, he provides the financial and administrative support, and he is also responsible for the final approval of manuscript.
\end{IEEEbiography}

\vspace{11pt}
\vfill

\end{document}